\newcommand* {\bra}[1]{\ensuremath{\langle {#1} |}}
\newcommand* {\ket}[1]{\ensuremath{| {#1} \rangle}}
\newcommand* {\ee}{\ensuremath{\mathrm{e}}}
\newcommand* {\cU}{\ensuremath{\mathcal{U}}}
\newcommand* {\cH}{\ensuremath{\mathcal{H}}}
\newcommand* {\cB}{\ensuremath{\mathcal{B}}}
\newcommand* {\id}{\ensuremath{\mathbbm{1}}}
\newtheorem{theorem}{Theorem}[section]
\newtheorem{proposition}[theorem]{Proposition}
\newtheorem{corollary}[theorem]{Corollary}
\newenvironment{proof}[1][Proof]{\begin{trivlist}
\item[\hskip \labelsep {\bfseries #1}]}{\end{trivlist}}
\newenvironment{remark}[1][Remark]{\begin{trivlist}
\item[\hskip \labelsep {\bfseries #1}]}{\end{trivlist}}
\newcommand{\qed}{\nobreak \ifvmode \relax \else
      \ifdim\lastskip<1.5em \hskip-\lastskip
      \hskip1.5em plus0em minus0.5em \fi \nobreak
      \vrule height0.75em width0.5em depth0.25em\fi}
\begin{document}
 
\title{Effects of stochastic noise on dynamical decoupling procedures}
\author{J. Z. Bern\'ad}
\email{Zsolt.Bernad@physik.tu-darmstadt.de}
\affiliation{Institut f\"{u}r Angewandte Physik, Technische Universit\"{a}t Darmstadt, D-64289, Germany}
\author{H. Frydrych}
\affiliation{Institut f\"{u}r Angewandte Physik, Technische Universit\"{a}t Darmstadt, D-64289, Germany}

\date{\today}

\begin{abstract}
Dynamical decoupling is an important tool to counter decoherence and dissipation effects in quantum systems originating from 
environmental interactions. It has been used successfully in many experiments; however, there is still a gap between fidelity improvements 
achieved in practice compared to theoretical predictions. We propose a model for imperfect dynamical decoupling based on a stochastic Ito 
differential equation which could explain the observed gap. We discuss the impact of our model on the time evolution of various quantum systems in 
finite- and infinite-dimensional Hilbert spaces. Analytical results are given for the limit of continuous control, whereas we present numerical 
simulations and upper bounds for the case of finite control.
\end{abstract}

\pacs{03.67.Pp,03.65.Yz}

\maketitle

\section{Introduction}
Dynamical decoupling is a well-established technique to suppress the effects of an unwanted Hamiltonian interaction on a quantum system, 
whereby the Hamiltonian can describe inner-system interactions as well as interactions with an environment 
\cite{Viola1,Viola2,Zanardi,Vitali}. 
The method works by asserting active control over the system: at certain times unitary operations are performed on the system and rotate it
 in its quantum state space such that over time the effects of the Hamiltonian cancel. 
Dynamical decoupling has been formally introduced in a general way by Viola {\it et al.} \cite{Viola1,Viola2}, but the basic principle was known 
and used before, particularly in the nuclear magnetic resonance community \cite{hahn,Carr,Meiboom,Haeberlen}, and 
the spin-echo effect \cite{hahn} is probably the best-known example of a simple decoupling procedure.

For an arbitrary quantum system the sequence of unitary operators needed to protect it can become fairly complex, and in general dynamical 
decoupling works only approximately in that it decouples the acting Hamiltonian only up to the first order in the Magnus expansion 
\cite{Magnus} of the average time evolution Hamiltonian. While the pioneering studies described dynamical decoupling with equidistant 
deterministic pulse sequences, a variety of approaches to the original idea were developed in order to optimize these sequences further 
\cite{Uhrig,Khodjatesh,Witzel,Lee,Gordon,Souza}. The concept of randomly chosen pulses was introduced as well \cite{Kern1,Viola3}, 
and it was shown that under certain conditions random dynamical decoupling could outperform the deterministic approach \cite{Santos}. 
A hybrid scheme combining the advantages of both approaches allows for additional improvements \cite{Kern2}. 
In the case of the 
protected system being a simple two-state quantum system, i.e., a qubit, dynamical decoupling techniques were applied with success in many 
experimental works \cite{Biercuk1,Biercuk2,Du,Barthel,Bluhm,deLange,Bylander,Damodarakurup,Lucamarini}.

Dynamical decoupling schemes are particularly promising in the field of quantum computing where decoherence poses a serious hurdle in experiments.
Decoupling techniques are able to drastically improve the coherence time of qubit states and of single- or two-qubit operations over the 
unprotected case. In theory, the achievable fidelity is limited only by the strength of the uncorrected higher orders of the average Hamiltonian and the time between decoupling pulses. In practice, however, we have to expect an additional fidelity decay from experimental realities, such as additional interactions on the 
qubits outside of the studied Hamiltonian interaction or imperfections in the measurement. Furthermore, we also have to consider imperfect decoupling controls. In theory it is often assumed that the decoupling pulses are perfect and happen instantaneously (infinitesimal width), which is an assumption that cannot hold in practice. 
In \cite{Khodjatesh2}, the effects of finite-width pulses were studied. The article also introduced a jitter pulse with uniformly distributed random parameters and examined its effects on a concatenated dynamical decoupling sequence.

In this paper we propose a different model for imperfect dynamical decoupling control described by a linear quantum stochastic differential equation with a standard Wiener process. We conduct a rigorous analysis of its effects on dynamical decoupling performance. In particular, we derive the generator of the time evolution and study its form in the limit of continuous control, which allows us to predict the robustness of different decoupling sequences against decoherence in our error model. Numerical simulations for finite numbers of pulses and upper bounds on the rate of convergence are also provided. 
While the model is phenomenological in nature, it is our hope that the results can help gain a deeper understanding of real-world decoupling procedures.
Our work is particularly inspired by a recent experimental publication \cite{Piltz} which demonstrated, for different pulse sequences, that
individual pulse imperfections can accumulate, but also compensate each other. Our model offers one possible explanation of these findings.

The paper is organized as follows. In Sec. \ref{sec:decoupling} we briefly review the dynamical decoupling method and introduce our stochastic noise model. We derive the resulting time evolution with the help of Ito calculus by using an approach similar to, but more general than, the idea developed in \cite{Facchi}. In Sec. \ref{sec:examples} we explore the effects of our noise model on a few exemplary systems in both finite- and infinite-dimensional Hilbert spaces. We present upper bounds for the convergence of our model depending on the number of pulses employed. 
Detailed derivations and proofs which support the main text are collected in two Appendixes.

\section{Decoupling schemes with stochastic noise}  \label{sec:decoupling}

In the general formulation of dynamical decoupling we consider a closed quantum system under the influence of a constant 
Hamiltonian. The goal is to control or protect a subpart of the full system, with the rest being treated as an environment. This 
is usually done by eliminating the interaction between the considered subpart and the rest of the system, while the free evolution of 
the subpart remains intact. In order to achieve this purpose we assume that, after fixed times  $n \tau$, a unitary pulse $u_n$ can be 
applied to the subsystem instantaneously (\emph{bang-bang} control \cite{Viola1}). 
With the right choice of pulses it is possible to drastically improve the lifetimes of quantum states in the subspace, and in the limit of 
continuous control $\tau \to 0$ we find perfect coherence preservation.

In the following we will model imperfect pulses $u_n$ to account for the fact that perfect instantaneous unitary pulses are not available in actual 
experiments. 
In our model we assume that the apparatus implementing the decoupling pulses $u_n$ causes a stochastic error over time to the controlled system. 
Mathematically we introduce this error by means of an Ito stochastic differential equation for the applied unitary pulses:
\begin{equation}
\label{basic}
du_n(t) = \left( -\frac{\gamma}{2} B^2 dt - i \sqrt{\gamma} B dW_t \right) u_n(t), \quad u_n(0) = u_n, 
\end{equation}
where $u_n(0)$ is the original ideal pulse, the parameter $\gamma$ stands for the strength of the disturbance, and 
$W_t$ is a classical Wiener process. The operator $B$ stands for a physical quantity, i.e., a self-adjoint operator, 
which describes the nature of the disturbance. $B^2$ is always a positive operator \cite{Rudin}. 
$W_t$ is defined as a Gaussian random variable with expectation value 0 and variance $t$:
\begin{eqnarray}
 &&\mathbb{E}[W_t]=0, \nonumber \\
 &&\mathbb{E}[W^2_t]-\mathbb{E}[W_t]^2=t.
\end{eqnarray}
In order to work with this equation we employ the quantum Ito rules \cite{Parthasarathy} in the sense that the differential equation deals with operators; however, the Wiener process used is still a classical and not an operator-valued process.

We need to ensure that the differential equation \eqref{basic} always results in a unitary operator for $t>0$. 
Using the properties of the 
Wiener process
\begin{equation}
d^2W_t=dt,\quad d^nW_t=0, \; n>2, 
\end{equation}
a straightforward calculation yields
\begin{eqnarray}
 &&d\left(u^\dagger_n(t)\,u_n(t)\right)=du^\dagger_n(t)\,u_n(t)+u^\dagger_n(t)\,du_n(t)+\nonumber \\
&&+du^\dagger_n(t)\,du_n(t)=0=d\left(u_n(t)\,u^\dagger_n(t)\right),
\end{eqnarray}
which means that the solution $u_n(t)$ is unitary for all $t\geqslant0$.

Our model ensures that the distance between the ideal pulse $u_n$ and the average of $u_n(t)$ (over a large sample size) increases with $t$. 
From the property
$\mathbb{E}[dW_t]=0$ it follows for the averaged pulse
\begin{equation}
\label{avbasic}
 \frac{d\mathbb{E}\left[u_n(t)\right]}{dt}=-\frac{\gamma}{2}B^2 \mathbb{E}\left[u_n(t)\right].
\end{equation}
The equation clearly describes a decaying effect, due to the fact that $B^2$ is positive. As an example, consider $B^2$ to be the identity
operator and the distance to be quantified by the operator norm, 
\begin{equation}
 ||A||_{op}=\sup\limits_{||x||=1}||Ax||,\,\,\,\,A \in {\cal B}(\mathcal{H}),
\end{equation}
where ${\cal B}(\mathcal{H})$ is the set of all bounded linear operators on the Hilbert space $\mathcal{H}$, and the vector norm $||\cdot||$ is 
generated by the inner product. In this case we have
\begin{equation}
 ||u_n-\mathbb{E}[u_n(t)]\,||_{op}=||\left(1-\ee^{-\frac{\gamma t}{2}}\right)u_n||_{op}=1-\ee^{-\frac{\gamma t}{2}}.
\end{equation}

We now derive the time evolution of the full system generated by a Hamiltonian $H$ and $N$ decoupling pulses, applied evenly over the whole
interaction time $t$ with a distance between pulses of $\tau=t/N$ and their imperfection governed by Eq. \eqref{basic}. After time 
$t$ the time evolution is governed by the unitary operator
\begin{equation}
\label{decouplingproc}
U_N(t) = \prod_{k=0}^{N-1} u_{N-k}(\frac{\tau}{N}) \ee^{-iH\tau} .
\end{equation}
The time parameter of $\frac{\tau}{N}$ for the pulses is a mathematical consequence of our intended noise model and should not be understood as the 
physical implementation time of the pulses. We want to study a global 
noise induced by the pulse-generating device which is governed by $W_t$. Mathematically we express the error by partitioning it over 
the $N$ pulses applied to the system, and due to the particular rescaling properties of the Wiener process we have $W_t = 1/\sqrt{a} W_{at}$ 
for $a>0$ and therefore
\begin{equation}
\ee^{-i W_t} = \ee^{-i N W_{t/N^2}} = \left(\ee^{-iW_{\tau/N}} \right)^N .
\end{equation}
Our model is phenomenological in nature. Since all pulses share the same random variable $W_t$, they will introduce the same error to the system during a single run of the experiment. However, the error will be different for subsequent repetitions of the experiment. As such, this model captures imperfections due to changing conditions between experimental runs, which cannot typically be avoided perfectly or might even happen deliberately, e.g., through necessary recalibrations of the experimental apparatus.

Our aim is to calculate the derivative $dU_N(t)$ in order to determine the time evolution of the system with $N$ stochastic pulses applied. 
We apply the Ito formula for an $N$-term product, which states
\begin{eqnarray}
 &&d(A_1 \cdots A_N)=d(A_1) A_2 \cdots A_N + \cdots + A_1 \cdots d(A_N) \nonumber \\
&&+\sum_{\text{all possible pairings}} A_1 \cdots d(A_i) \cdots d(A_j) \cdots A_N .
\end{eqnarray}
We introduce the operators
\begin{equation}
g_n(\tau) = \prod_{k=0}^{n-1} u_{n-k}(\frac{\tau}{N}) \ee^{-iH\tau} , \quad g_n := g_n(0), \quad g_0(\tau) := \mathbbm{1},
\end{equation}
and find
\begin{eqnarray}
dU_N(t) = \Big(&-&i H_N(t) dt - i \sqrt{\gamma} B_N(t) dW_t -\nonumber \\
&-&\frac{\gamma}{2} C_N(t) dt \Big)U_N(t), 
\end{eqnarray}
where
\begin{eqnarray}
H_N(t) &=& \frac{1}{N} \sum_{k=1}^N g_{k-1}(\tau) u_{N-k+1}(\frac{\tau}{N}) H u_{N-k+1}^\dag(\frac{\tau}{N}) g_{k-1}^\dag(\tau), \nonumber \\
\label{HN}\\
B_N(t) &=& \frac{1}{N} \sum_{k=0}^{N-1} g_k(\tau) B g_k^\dag(\tau), \label{BN}\\
C_N(t) &=& \frac{1}{N^2} \sum_{k=0}^{N-1} g_k(\tau) B^2 g_k^\dag(\tau)+ \label{CN} \\
  &+& \frac{2}{N^2} \sum_{j=0}^{N-2} \sum_{k=j+1}^{N-1} g_j(\tau) B g_j^\dag(\tau) g_k(\tau) B g_k^\dag(\tau), \nonumber
\end{eqnarray}
$H_N(t)$ is the Hamiltonian operator generating the error-free time evolution. 
The operators $B_N(t)$ and $C_N(t)$ express the error due to the stochastic noise and are related as
\begin{equation}
\frac{C_N(t) + C_N^\dag(t)}{2} = B^2_N(t).
\end{equation}

We are particularly interested in the limit of $N \to \infty$, the limit of continuous control. The time evolution in this limit, 
\begin{equation}
d\cU(t) = -i\cH \cU(t) dt - i\sqrt{\gamma} \cB \cU(t) dW_t - \frac{\gamma}{2} \cB^2 \cU(t) dt,
\label{TevU}
\end{equation}
with
\begin{equation}
\cU(t=0)=\lim_{N \to \infty} \prod_{k=0}^{N-1} u_{N-k},  
\end{equation}
defines the time evolution of any observable $O(t)=\cU^\dag(t)O \cU(t)$ and of any density matrix $\rho(t)=\cU(t) \rho \cU^\dag(t)$.
The initial condition $\cU(t=0)$ is a unitary operator due to the fact that unitary operators are closed under multiplication. Most of the 
experiments are designed such that $\cU(t=0)=\id$. We introduced the following notations:
\begin{eqnarray}
\cU(t) &=& \lim_{N\to\infty} U_N(t),\,\,\,  \cH = \lim_{N\to\infty} H_N(t),  \\
\cB &=& \lim_{N\to\infty} B_N(t), \,\,\,\, \cB^2 = \lim_{N\to\infty} C_N(t). 
\end{eqnarray}
Using the Ito rules, a straightforward 
calculation for the dynamical evolution of the density matrix yields
\begin{equation}
\label{maineq}
d\rho(t)=\mathcal{L}\rho(t)dt-i\sqrt{\gamma} \left[\cB,\rho(t)\right] dW_t,
\end{equation}
where the Kossakowski-Lindblad generator \cite{Gorini,Lindblad} is given by
\begin{equation}
 \mathcal{L}\rho=-i\left[\cH, \rho\right]+\gamma \cB \rho \cB-\frac{\gamma}{2}
\left(\cB^2 \rho + \rho \cB^2\right).
\end{equation}

The most interesting scenario for both theoretical and experimental studies is the case of finite decoupling 
sequences repeatedly applied in cycles. We consider the sequence length to be $M>0$ and $g_M=U$. The number of cycles is taken to be 
$L$ and $N=L M$. Now, in the $N \to \infty$ limit, the generator of Eq. \eqref{TevU} can be expressed as the limit of
a sequence of Ces\`{a}ro means. This can be easily seen if we take $\tau \to 0$ in Eqs. \eqref{HN},\eqref{BN}, and \eqref{CN}.
Therefore, we can associate the limiting procedure for the generator 
to von Neumann's mean ergodic theorem which describes the limit as a projection operator (see Appendix \ref{appendix1}).

If $H$ and $B$ are bounded operators, the formulas for $\cH$ and $\cB$ in the limit $N \to \infty$ result in
\begin{eqnarray}
\cH&=&\mathcal{P}\left(\frac{1}{M} \sum_{j=0}^{M-1} g_j H g_j^\dag\right), \nonumber \\ 
\cB&=&\mathcal{P}\left(\frac{1}{M} \sum_{j=0}^{M-1} g_j B g_j^\dag\right), \label{generallimit}
\end{eqnarray}
where $\mathcal{P}$ is the projector onto the set $\{X \in {\cal B}(\mathcal{H})|UX=XU\}$.

In the special case $g_M = \id$ and generally $g_n = g_{n \bmod M}$, the projection operator $\mathcal{P}$ is simply the identity map on 
${\cal B}(\mathcal{H})$, and we get
\begin{align}
\cH &= \frac{1}{M} \sum_{j=1}^M g_j H g_j^\dag \label{grouplimit}.
\end{align}
This is the common decoupling condition due to Viola {\it et al.} \cite{Viola2}. 
The limit of the error operator also simplifies to
\begin{equation} \label{B}
\cB = \frac{1}{M} \sum_{j=1}^M g_j B g_j^\dag. 
\end{equation}
The structure of $\cH$ and $\cB$ is formally equivalent and suggests the possibility that a carefully designed decoupling scheme may 
eliminate both operators.

Another special case is a single decoupling operator $U$.
Now, we have the following simplified equations:
\begin{equation}
 \cH=\mathcal{P}(H), \quad \cB=\mathcal{P}(B).
\end{equation}
This case is widely used with the condition $U^2 = \id$.

If the operators are unbounded, their domain must be taken into consideration. Equation \eqref{maineq} can be defined only on the overlap 
$A=\mathrm{Dom}(H) \cap \mathrm{Dom}(B)$, and now 
$\cH, \cB \in \{X|\mathrm{Dom}(X)\subseteq A ,UX=XU\}$.

\section{Examples}  \label{sec:examples}
In this section we study the effects of our imperfect decoupling pulses on various quantum systems. First we look at two 
qubits coupled by a Hamiltonian interaction. Despite this system's simplicity we are able to derive a number of properties of our noise model which 
carry over to more complex systems. This is demonstrated in our study of an electron spin coupling to a nuclear spin bath. Finally, we briefly 
look at two coupled harmonic oscillators as an example of a quantum system in an infinite-dimensional Hilbert space. 

\subsection{Two coupled qubits}
We consider two coupled qubits with energy eigenstates
$\ket{0}_i$ and $\ket{1}_i$ ($i \in \{1,2\}$) under the influence of the Hamiltonian
\begin{equation} \label{toymodel}
H=\frac{\hbar \omega}{2} \sigma^{(1)}_z  \otimes \id^{(2)} + \frac{\hbar \omega}{2} \id^{(1)} \otimes 
\sigma^{(2)}_z 
+ g \sigma^{(1)}_x \otimes 
\sigma^{(2)}_x,
\end{equation}
 with $\hbar \omega$ the energy difference between the levels of the qubits and $g$ the coupling constant between the two qubits. We also introduced the notations 
$\sigma^{(i)}_x=\ket{1}_{ii} \bra{0}+\ket{0}_{ii} \bra{1}$ and 
$\sigma^{(i)}_z=\ket{1}_{ii} \bra{1} - \ket{0}_{ii} \bra{0}$ ($i \in \{1,2\}$). 
By means of dynamical decoupling we want to protect the evolution of qubit $1$ against the effects of the interaction with 
qubit $2$, which induce transitions between the energy eigenstates of the two qubits.
Although this model is very primitive from a physical viewpoint, it is very useful to showcase quite a few properties of our noisy decoupling. These properties can then directly be applied to more sophisticated systems.

For our decoupling scheme we choose a single unitary pulse of the form
\begin{eqnarray} \label{tq_pulse}
 U&=&\sigma^{(1)}_z \otimes \id^{(2)}_2,
\end{eqnarray}
resulting in two scheme operators $g_1 = U$ and $g_2 = \id$. 
In the limit of $N\to \infty$ it follows from Eq. \eqref{grouplimit} that 
\begin{equation}
\label{idealH}
\mathcal{H}=\frac{\hbar \omega}{2} \sigma^{(1)}_z  \otimes \id^{(2)} + \frac{\hbar \omega}{2} \id^{(1)} \otimes 
\sigma^{(2)}_z,
\end{equation}
which is the original model Hamiltonian minus the interaction between the two qubits, just as we intended.

For the self-adjoint error operator $B$ we consider the most general form:
\begin{eqnarray}
\label{formofB2}
B&=&B^{(1)} \otimes B^{(2)}, \nonumber \\
B^{(1)}&=&\left(\alpha_0 \id^{(1)} +\alpha_x \sigma^{(1)}_x + \alpha_y \sigma^{(1)}_y+ \alpha_z \sigma^{(1)}_z
\right),\nonumber \\
B^{(2)}&=&\left(\beta_0 \id^{(2)} +\beta_x \sigma^{(2)}_x + \beta_y \sigma^{(2)}_y+ \beta_z \sigma^{(2)}_z\right),
\nonumber \\ 
&~&\alpha_0, \alpha_x, \alpha_y, \alpha_z, \beta_0, \beta_x, \beta_y, \beta_z \in \mathbb{R}.
\end{eqnarray}  

Let us first study the effects of the error operator $B$ in the limit of continuous control $N \to \infty$. 
Substituting $B$ into Eq. \eqref{B} we obtain
\begin{equation}
\label{formofB}
\mathcal{B}= \left(\alpha_0 \id^{(1)} + \alpha_z \sigma^{(1)}_z\right) \otimes B^{(2)}.
\end{equation}
This implies the time evolution
\begin{equation}
\label{eqcommut}
d\rho=-i\left[\mathcal{H}, \rho\right]dt-\frac{\gamma}{2}
\left[\cB, \left[\cB,\rho \right]\right]dt-i\sqrt{\gamma} \left[\cB,\rho\right] dW_t. 
\end{equation}

We assume an initially separable state $\rho(0)=\rho_1 \otimes \rho_2$, where we try to protect the free evolution of 
$\rho_1(t)$ governed by the first term of the Hamiltonian \eqref{toymodel}. In the case of ideal unitary pulses ($B=0$) the whole system dynamics 
is governed 
by the Hamiltonian of Eq. \eqref{idealH} (in the limit of $N\to \infty$), so this is indeed achievable.
With active errors, the dynamics for $\rho_1(t)=\mathrm{Tr}_2\{ \rho(t)\}$, with $\rho(t)$ a solution of Eq. \eqref{eqcommut}, will depend 
on the nature of $\cB$, and may display a complicated time evolution.

First, we note that in the case of a specific subset of initial conditions, namely
\begin{equation}
\label{def}
\rho(0)=\rho_1 \otimes \rho_2,\,\, \forall \rho_1\in 
\big\{\rho\big|[\rho,\sigma_z]=0\big\}, 
\end{equation}
there is no effect induced by the imperfections.
This is a direct consequence of Eq. \eqref{eqcommut} and the form of $\cB$ in Eq. 
\eqref{formofB}. $\rho_2$ is taken as an arbitrary density matrix of system $2$. In the following we consider a simple 
example for the initial condition 
\begin{eqnarray}
\rho(0)&=&\ket{\Psi}\bra{\Psi} \otimes \rho_2 ,\,\ket{\Psi}=(\ket{0}_1+\ket{1}_1)/\sqrt{2}, \nonumber \\
\rho_2&=&\frac{1}{2} \id^{(2)}, 
\end{eqnarray}
which does not fall into the definition of Eq. \eqref{def}. We determine the fidelity between the ideal evolution of the pure state 
\begin{equation}
\ket{\Psi(t)}=\ee^{-i \omega \sigma^{(1)}_z t/2} \Big(\ket{0}_1+\ket{1}_1\Big)/\sqrt{2}
\end{equation}
and the mixed state $\mathrm{Tr}_2\{ \rho(t)\}$ given by    
\begin{eqnarray}
F(t)=\sqrt{\bra{\Psi(t)} \mathrm{Tr}_2\{ \rho(t)\} \ket{\Psi(t)}}.
\end{eqnarray}  
\begin{figure}[t]
\begin{center}
\includegraphics[width=8.cm]{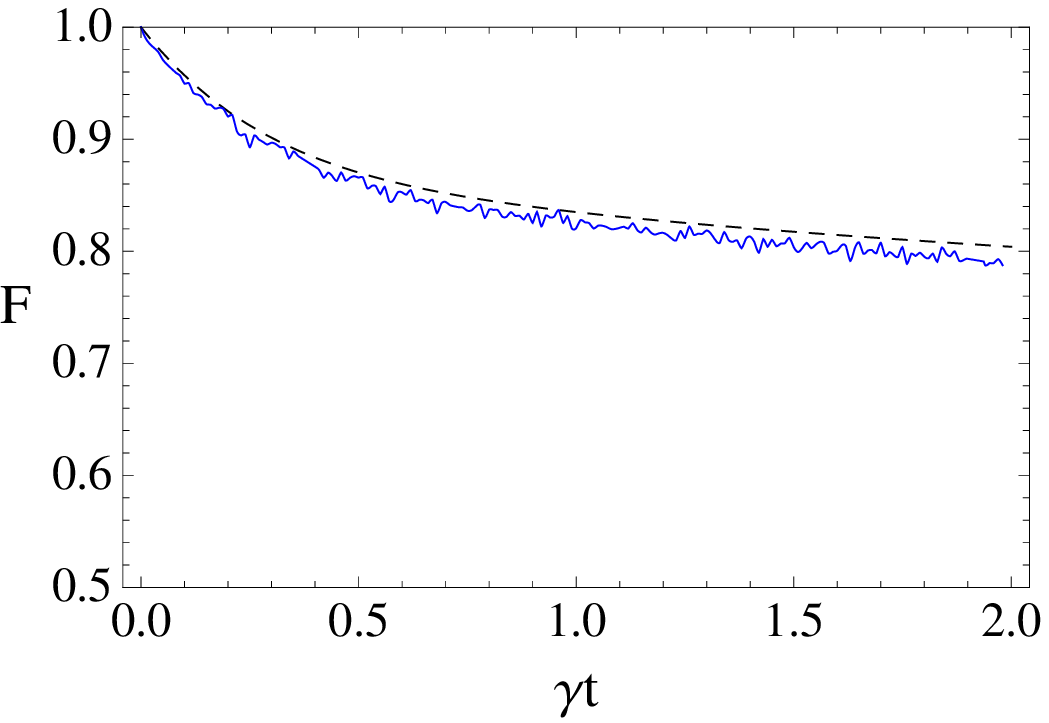}
\includegraphics[width=8.cm]{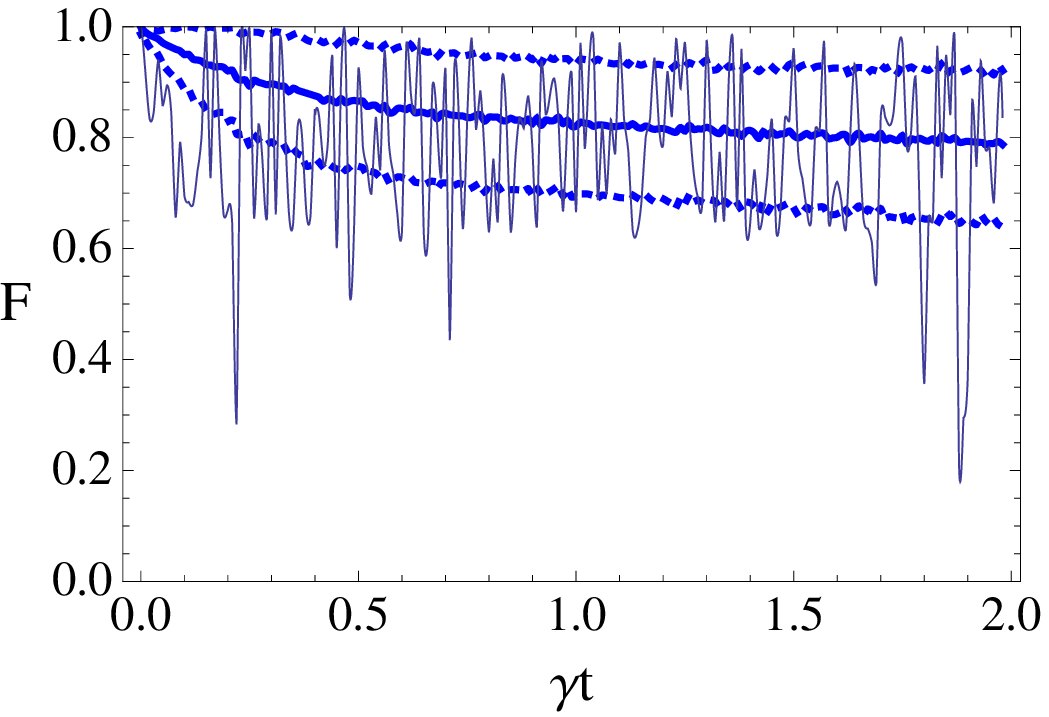}
\caption{\label{Fig2}(Color online) Dependence of fidelity $F$ on $\gamma t$. The first figure shows the average over $1000$ realizations of Eq.
\eqref{eqcommut}, where the dashed line gives the analytical result. The second figure shows a single realization of Eq.
\eqref{eqcommut}, plotted as a solid line. The thick solid line corresponds to the average over $1000$ realizations and the thick dotted 
lines define the standard
deviation of the averaged fidelity. The parameters of Eq. \eqref{formofB}, describing the nature of the error, are set to be 
$\alpha_0=\beta_0=\beta_x=\beta_y=\beta_z=1$ and $\alpha_z=1/2$.}
\end{center}
\end{figure}
 
In Fig. \ref{Fig2}, numerical results are presented for characterizing the fidelity as a distance measure of the ideal state evolution
and the state emerging from the imperfect decoupling scheme. The average fidelity describes the average over many runs, which
must be prepared similarly. From an ensemble of realizations the average fidelity is determined by 
\begin{equation}
\bar{F}(t)=\frac{1}{M}\sum^M_{i=i}\sqrt{\bra{\Psi(t)} \mathrm{Tr}_2\{ \rho_i(t)\} \ket{\Psi(t)}},
\end{equation}
where $\rho_i(t)$ is one realization of Eq. \eqref{eqcommut} and $M$ is the size of the ensemble. The corresponding standard deviation
$\sigma_t$ is given by
\begin{equation}
\sigma^2_t=\frac{1}{M}\sum^M_{i=i}\left(\sqrt{\bra{\Psi(t)} \mathrm{Tr}_2\{ \rho_i(t)\} \ket{\Psi(t)}}-\bar{F}(t)\right)^2.
\end{equation}
Instead of using algorithms for stochastic evolution, we simply generated several realizations of the Wiener 
process $W_t$ and substituted into the integrated form of Eq. \eqref{eqcommut}. We found that for $1000$ different realizations the average
fidelity coincides with the analytical solution of
\begin{equation}
 \frac{d\rho}{dt}=-i\left[\mathcal{H}, \rho\right]-\frac{\gamma}{2}
\left[\cB, \left[\cB,\rho \right]\right].
\end{equation}

The figures tell us that even in the limit $N\to \infty$ there is a decay of fidelity over time due to the error $\cB$ which is not present in the 
ideal decoupling scenario. However, as we see in Eq. \eqref{formofB}, only those parts of the error operator $B$ survive which commute with $U$. 
If we choose a different decoupling operator $U = \sigma_y^{(1)} \otimes \id_2^{(2)}$, then we would find $\cH = 0$ and 
$\cB = \left(\alpha_0 \id^{(1)} + \alpha_y \sigma_y^{(1)} \right) \otimes B^{(2)}$. Depending on the actual parameter values $\alpha_i$ of $B$ 
this offers one explanation for fidelity differences between different decoupling operators as observed in experiments.

\begin{figure}[t]
\begin{center}
\includegraphics[width=8.cm]{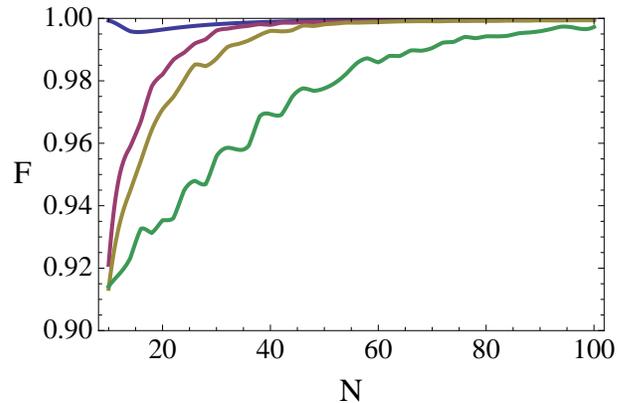}
\caption{\label{FigFinitePulses}(Color online) Dependence of fidelity on number of pulses. Plotted is the fidelity $F$ over the number of decoupling 
pulses $N$. The initial state of qubit 1 is $\rho_1=\ket{0}\bra{0}$, and the error is chosen to be $B=\sigma_y^{(1)} \otimes \sigma_y^{(2)}$. The 
remaining parameters are chosen as $\omega t = 10$ and $\gamma t \in \{0,50,100,500\}$, where the values of $\gamma t$ are in the order of 
the highest to lowest graph in the plot.}
\end{center}
\end{figure}

So far we have only studied the limit $N \to \infty$ which is obviously not achievable in experimental implementations of a dynamical decoupling 
scheme. In the following, we look at finite numbers of pulses where the noncommuting terms of the error
operator $B$ do play a role and reinduce transitions between the two qubits, with a decreasing contribution by the increase of $N$. 
We conducted a series of numerical simulations to capture the effects of the stochastic error for finite numbers of pulses. For the initial state 
of the first qubit we looked at an eigenstate of the Hamiltonian, $\ket{\Psi} = \ket{0}_1$, and at the superposition 
$\ket{\Psi} = \frac{1}{\sqrt{2}} ( \ket{0}_1 + \ket{1}_1)$. The state of qubit 2 is the totally mixed state. We investigated cases of 
commuting and noncommuting error operators, $B= \sigma_z^{(1)} \otimes \sigma_z^{(2)}$ and $B=\sigma_y^{(1)} \otimes \sigma_y^{(2)}$, respectively.
Figure \ref{FigFinitePulses} shows results for the case of the eigenstate and a noncommuting error. Plotted is the achieved fidelity 
(averaged over 1000 runs) after a fixed time $t$ depending on the number of pulses $N$ applied during that time. 
We can see that for large $N$ the fidelity approaches 1, which is in agreement with our analysis of the limit $N\to\infty$. 
For smaller $N$, however, there is a drop in the fidelity which depends on the strength of the error.

Figure \ref{FigSuperPos} shows results for the simulations which were conducted with the superposition as the initial state. The two depicted 
plots show the achieved fidelity after a fixed time $t$ depending on the number of pulses $N$ for a noncommuting and a commuting $B$, respectively. 
In the noncommuting case we approach fidelity 1 with increasing $N$ as expected, however, the drop in fidelity for lower numbers of pulses is 
stronger than in the case of the eigenstate. In the case of the commuting error we see a constant drop of the fidelity independent of $N$ which is 
expected, since the decoupling scheme has no effect on the commuting error.

The case of a commuting error $B$ acting on an initial eigenstate was not depicted because in this case the error has no effect on the qubit state.

\begin{figure}
\begin{center}
\includegraphics[width=8.cm]{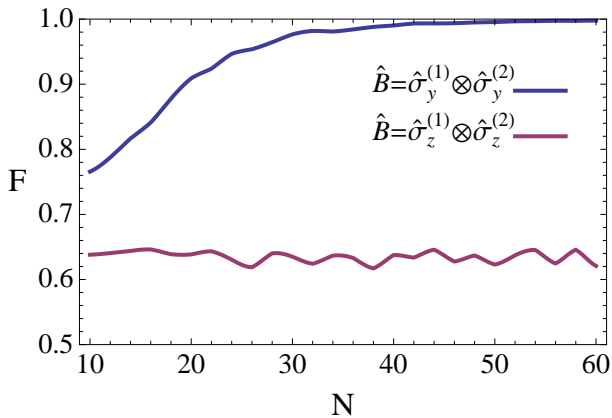}
\caption{\label{FigSuperPos}(Color online) Dependence of fidelity on number of pulses. Plotted is the fidelity $F(t)$ over the number of decoupling 
pulses $N$. The initial state of qubit 1 is $\rho_1=\frac{1}{2}(\ket{0}+\ket{1})(\bra{0}+\bra{1})$, and the error is chosen to be $B=\sigma_y^{(1)} \otimes \sigma_y^{(2)}$ (upper curve) or $B=\sigma_z^{(1)} \otimes \sigma_z^{(2)}$ (lower curve). The 
remaining parameters are chosen as $\omega/g = 10$, $\omega t = 10$ and $\gamma t = 50$.}
\end{center}
\end{figure}

The simulation results show that, with increasing $N$, we approach the dynamics of the continuous control limit which we studied earlier. As a final step in our analysis we would like to get a sense for the rate of convergence, which would allow us to predict the necessary number of pulses to reach a good approximation of the limit $N\to \infty$. Let $\rho_N(t)$ be the state emerging after time $t$ and the application of $N$ decoupling pulses.
In the following, we look at the distance between $\rho_N(t)$ and $\rho_{N+2}(t)$, the latter resulting from applying one additional cycle of our decoupling scheme. (The cycle length is $2$ as $U^2=\id$.) As a distance
measure we use the operator norm $||A||_{op}$. Our goal is to derive an upper bound for the averaged distance $\mathbb{E}\left[ ||\rho_N(t)-\rho_{N+2}(t)||_{op}\right]$ that can tell us how many pulses are needed to approach the limit $N\to\infty$. 

For the case of ideal, deterministic pulses, we found (see Appendix \ref{appendix2} for details of the derivation) 
\begin{eqnarray}
&&||\rho_N(t)-\rho_{N+2}(t)||_{op}\leqslant ||\rho(0)||_{op}\left(\ee^{4||H||_{op}t/N}-1\right)+ \nonumber \\
&&\,\,\,\,\,\,\,\,\,\,\,\,\,\,\,\,\,\,\,\,\,\,\,\,\,\,\,\,\,\,\,\,\,\,\,+2||\rho(0)||_{op}\left(\ee^{2||H||_{op}t/N}-1\right),
\end{eqnarray}
where
\begin{equation}
\rho_N(t)=\left(\prod_{k=0}^{N-1} U \ee^{-iHt/N}\right) \rho(0) \left(\prod_{k=0}^{N-1} U \ee^{-iHt/N}\right)^\dagger.
\end{equation}

In the case of stochastic decoupling pulses, the state of the system after $N$ pulses is of the form
\begin{eqnarray}
 \rho_N(t)&=&U_N(t) \rho(0) U^\dagger_N(t), \nonumber \\
U_N(t)&=& \left[ U\left(t/N^2\right) \ee^{-iHt/N}\right]^N, \nonumber \\
U(t/N^2)&=&\ee^{-i \sqrt{\gamma}B W_{t/N^2}}U.
\label{in1}
\end{eqnarray}
We were able to derive the following inequality (see Appendix \ref{appendix2} for details of the derivation)
\begin{align}
\mathbb{E}& \left[ ||\rho_N(t) -\rho_{N+2}(t)||_{op}\right] \leqslant ||\rho||_{op}\Big(\ee^{4||H||_{op}t/N} \nonumber \\
& +2\ee^{2\left[ ||H||_{op}/N + 2\gamma ||B||^2_{op}/(N+2)^2 \right] t} \nonumber \\
& +\ee^{4 \sqrt{\gamma t (N+1) }/(N+2) ||B||_{op}} - 4 \Big).
\end{align}
As expected, the upper bound goes to 0 for $N\to \infty$. Unfortunately, it turns out that the upper bound is still far above the actual average distance we retrieved from numerical simulations, particularly for the rather large values of $\omega t$ and $\gamma t$ we used in our simulations for Fig. \ref{FigFinitePulses}. This means that the approximations used in the derivation of this upper bound are too rough to capture all the relevant details of the state evolution. Although the current upper bound is of limited practical use, we present it as a first step and hope to build on it in future work to improve on the upper bound.

\subsection{A spin interacting with a bath of nuclear spins}
We now regard a system-environment model
 where a single electron spin qubit is coupled to a number $K$ of nuclear spins. 
 We consider an interaction Hamiltonian which is given by the Fermi contact hyperfine interaction \cite{semenov, coish}. Formally 
we can describe both the electron spin and the nuclear spins as simple qubits, where we label the electron spin with the index $(1)$ and 
the nuclear spins with indices from $(2)$ to
 $(K+1)$. The Hamiltonian of our model is of the form
\begin{eqnarray}
H &=&  \sum_{k=2}^{K+1} A_k \left( \sigma_x^{(1)} \sigma_x^{(k)} + \sigma_y^{(1)} 
\sigma_y^{(k)} + \sigma_z^{(1)} \sigma_z^{(k)} \right)+ \nonumber \\
&+&\sum_{k=1}^{K+1} \frac{\hbar \omega_k}{2} \sigma_z^{(k)}
\end{eqnarray}
where the hyperfine coupling constant with index $k$ is given by $A_{k}\sim\left|\psi(\mathbf{r}_{k})\right|^{2}$. $\psi(\mathbf{r}_{k})$
is the electron envelope wave-function, evaluated at $\mathbf{r}_{k}$, the position of the $k^{\mathrm{th}}$ nuclear spin. 
$\omega_k$ is the split frequency for each individual spin qubit. From our perspective we consider the electron spin qubit as the system we want to 
protect, whereas the nuclear spins should be regarded as an environment whose influence on the electron spin we want to eliminate. 
Interactions between the nuclear spins are not considered as they have no direct effect on the electron spin.

This model is a natural extension of the two-qubit case we studied in the previous section. The difference is that our system qubit is now coupled to 
more than one other qubit, and the interaction involves an additional $\sigma_z \otimes \sigma_z$ term which we did not consider before. 
Despite these differences, the results we obtained for the two-qubit case generalize in a natural way. Analogously to Eqs. \eqref{tq_pulse} and 
\eqref{formofB2}, we choose for the decoupling operator $U$ and the error operator $B$
\begin{align}
U &= \sigma_z^{(1)} \otimes \id^{(2)} \otimes \dots \otimes \id^{(K+1)}, \\
B &= B^{(1)} \otimes B^{(2)} \otimes \dots \otimes B^{(K+1)} .
\end{align} 
As we know from the two-qubit model, only those terms in $H$ and $B$ which commute with $U$ survive in the limit $N\to \infty$. Therefore, we find for 
the 
effective Hamiltonian
\begin{equation}
\cH = \sum_{k=1}^{K+1} \frac{\hbar \omega_k}{2} \sigma_z^{(k)} + \sum_{k=2}^{K+1} A_k  \sigma_z^{(1)} \sigma_z^{(k)} ,
\end{equation}
which, compared to Eq. \eqref{idealH}, includes the surviving interaction term $\sigma_z \otimes \sigma_z$. The effective error $\cB$ is still 
equivalent to Eq. \eqref{formofB} and is given by
\begin{equation}
\cB = \left(\alpha_0 \id^{(1)} + \alpha_z \sigma^{(1)}_z\right) \otimes B^{(2)} \otimes \dots \otimes B^{(K+1)}.
\end{equation} 
As before, initial states of 
the form of Eq. \eqref{def} remain unaffected over time in the limit. However, if the electron spin is in a superposition, it will be 
affected by the surviving $\sigma_z \otimes \sigma_z$ coupling to the nuclear spins, which means that the achievable state fidelity is 
limited by this interaction. We ran numerical simulations for an electron spin coupled to five nuclear spins for both cases with finite numbers of pulses. 
As we can see from the results depicted in Fig. \ref{FigNSpin1}, the eigenstate approaches fidelity 1 for larger numbers of pulses, as predicted for 
the limit $N\to \infty$. The fidelity of the superposition state, however, is severely limited by the remaining coupling.

\begin{figure}
\begin{center}
\includegraphics[width=8.cm]{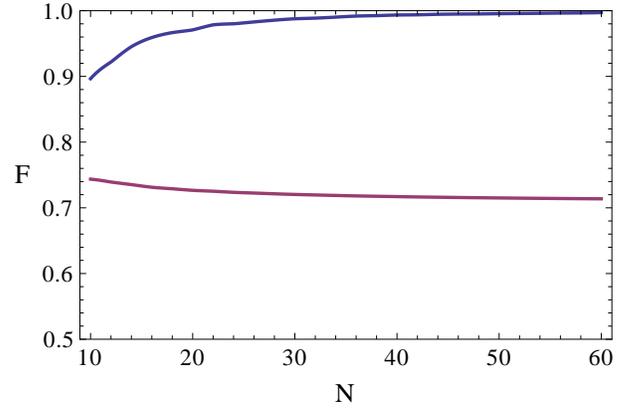}
\caption{\label{FigNSpin1}(Color online) Dependence of fidelity on number of pulses for an electron spin coupled to five nuclear spins. Plotted is the fidelity 
$F$ over the number of decoupling 
pulses $N$. The upper curve depicts behaviour of the fidelity for the electron spin in the initial state $\rho_1=\ket{0}\bra{0}$, whereas the lower 
curve starts with the electron spin in the state $\rho_1=\frac{1}{2}(\ket{0}+\ket{1})(\bra{0}+\bra{1})$. The error is chosen to be 
$B=\sigma_x^{(1)} \otimes \sigma_x^{(2)} \otimes \dots \otimes \sigma_x^{(K+1)}$ with $\gamma t = 5$. The 
remaining parameters are chosen as $\omega t = 2$ and (compare \cite{coish}) $A_k/\omega=\exp\left[ -\left(\frac{k}{5}\right)^{1/3} \right] $.}
\end{center}
\end{figure}

In order to get rid of the remaining nuclear spin interaction in $\cH$, we need a two-pulse decoupling sequence. This new sequence consists of
 alternating applications of any two Pauli pulses, e.g., $\sigma_x$ and $\sigma_y$ or $\sigma_x$ and $\sigma_z$, on the electron spin. 
We choose the sequence
\begin{equation}
\sigma^{(1)}_x \otimes \id^{\otimes K} \to \sigma^{(1)}_z \otimes \id^{\otimes K} \to \sigma^{(1)}_x \otimes \id^{\otimes K} \to \cdots ,
\end{equation}
which means that the cycle length is $4$. Looking at the limits and using Eqs. \eqref{grouplimit} and \eqref{B}($M=4$), we now find
\begin{eqnarray}
\cH =\cB=0
\end{eqnarray}
which is independent of the choice of error operator $B$. 

This can be easily understood if we consider the following simple example: Consider an arbitrary operator 
$A=a_0 \id + a_x \sigma_x +a_y \sigma_y+a_z\sigma_z$ and pulse sequences $\sigma_x$ and $\sigma_z$. The order of application is the same as above:
$\sigma_x \to \sigma_z \to \sigma_x \to \cdots$. Equation \eqref{generallimit} for $U=\sigma_x\sigma_z$ results in
\begin{equation}
 \mathcal{A}=\mathcal{P}\left(\sigma_x A \sigma_x +A\right),
\end{equation}
where $\mathcal{P}$ projects onto $\{X \in \mathcal{M}_2|\sigma_yX=X\sigma_y\}$ ($\mathcal{M}_2$ is the set of all $2\times2$ complex matrices). First,
we observe that
\begin{equation}
\sigma_x A \sigma_x +A=2 a_0 \id+ 2 a_x \sigma_x
\end{equation}
and now applying the projection map we obtain
\begin{equation}
\mathcal{P}\left(2a_0 \id+ 2a_x \sigma_x\right)=2a_0 \id.
\end{equation}
If we switch the order of pulses to $\sigma_z \to \sigma_x \to \sigma_z \to \cdots$ then we arrive at
\begin{equation}
\mathcal{P}\left(2a_0 \id+ 2a_z \sigma_z\right)=2a_0 \id
\end{equation}
which shows that the order is irrelevant.

This means that, in the limit $N \to \infty$, this decoupling scheme not only eliminates 
the influence of the nuclear spins on the electron spin completely, it is also robust against our stochastic noise model. Figure \ref{FigNSpin2} 
shows the performance of this scheme for both an eigenstate and a superposition state, depending on the number of 
pulses $N$. As we see, the new scheme is a vast improvement in the superposition case and is able to achieve a fidelity close to 1. For the eigenstate it also approaches a fidelity of 1; however, compared with the single-pulse decoupling, it is not an improvement. The lesson here is that it is beneficial to keep decoupling schemes as simple as possible for a specific application.

\begin{figure}
\begin{center}
\includegraphics[width=8.cm]{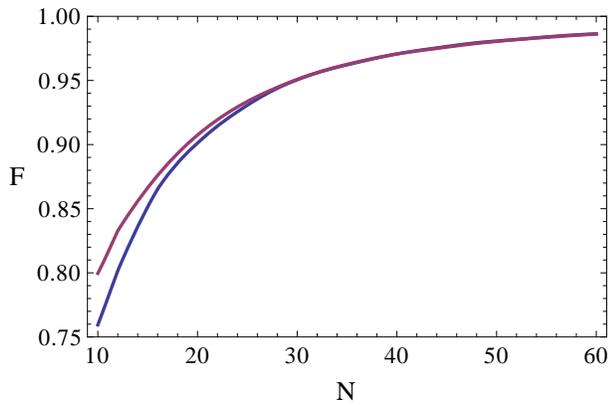}
\caption{\label{FigNSpin2}(Color online) Dependence of fidelity on number of pulses for an electron spin coupled to five nuclear spins. Plotted is the fidelity $F$ 
over the number of decoupling 
pulses $N$ for the case of the two-pulse 
decoupling scheme. The lower (blue) curve shows results for an eigenstate, whereas the upper (purple) curve corresponds to a superposition state. All parameters are chosen identically to Fig. \ref{FigNSpin1}.}
\end{center}
\end{figure}

We also derived an upper bound for the averaged distance $\mathbb{E}\left[ ||\rho_N(t)-\rho_{N+4}(t)||_{op}\right)$ for this two-pulse scheme, where $N+4$ accounts for the fact that, this time, the length 
of the cycle is 4. Using the results of Appendix \ref{appendix2},
the following averaged inequality can be derived for $t\geqslant0$:
\begin{align}
\mathbb{E}& \left[ ||\rho_N(t)-\rho_{N+4}(t)||_{op}\right] \leqslant ||\rho||_{op}\Big(\ee^{8||H||_{op}t/N}  \nonumber \\
&+4\ee^{2\left[||H||_{op}/N + 2\gamma ||B||^2_{op}/(N+4)^2 \right]t}  \nonumber \\
&+\ee^{8\sqrt{\gamma t (N+1) }/(N+4)||B||_{op}}-6 \Big).
\end{align}

\subsection{Two coupled harmonic oscillators}

So far we have discussed in detail the effects of our noise model on systems in finite-dimensional Hilbert spaces, and we were able to use the full arsenal 
of tools
derived in the Appendixes. In the following, we consider an infinite-dimensional Hilbert space. Due to the added complexity of dealing with infinite 
Hilbert spaces, we restrict ourselves to investigating the limit of large $N$ only. 

Let us consider a model of two coupled quantum harmonic oscillators, which are defined on a countable-infinite dimensional Hilbert space, i.e., a
tensor product of two symmetric Fock spaces \cite{Fockspace}. Furthermore, the domains of creation and annihilation operators
are dense in the Hilbert space of the harmonic oscillator and also equal, so none of the basis is excluded in the ergodic limiting procedure
(see Appendix \ref{appendix1}). We model the system by the Hamiltonian
\begin{equation}
 H=\hbar \omega_A a^\dagger a + \hbar \omega_B b^\dagger b+\hbar g \left(a^\dagger +a \right)\left(b^\dagger +b \right),
\end{equation}
where $\omega_A$ ($\omega_B$) is the frequency of oscillator $A$ ($B$) and $g$ is the strength of the interaction. $a$ ($a^\dagger$) and
$b$ ($b^\dagger$) are the creation (annihilation) operators of oscillators $A$ and $B$. 

To decouple system $A$ from system $B$, we can use a decoupling pulse of the form
\begin{equation}
\label{harmoscdec}
 U=\ee^{i \varphi a^\dagger a} \otimes \id_B, \; \varphi \in (0,\pi],
\end{equation}
a result taken from \cite{Bernad}. Here we restrict ourselves to the special case $\varphi=\pi$, so that the decoupling pulse is equal to $\mathcal{P}_A \otimes \id_B$, with $\mathcal{P}_A$ the parity 
operator \cite{Vitali} acting on system $A$.
At the limit $N \to \infty$, the joint system is governed by the Hamiltonian [see Eq. \eqref{grouplimit}]
\begin{eqnarray}
 &&\cH=\hbar \omega_A a^\dagger a + \hbar \omega_B b^\dagger b. 
\end{eqnarray}

Next, we define the nature of the error by the self-adjoint operator:
\begin{equation}
 B=\left(a^\dagger a+a^\dagger +a \right) \otimes \id_B.
\end{equation}
Substituting $B$ into Eq. \eqref{B}, we find
\begin{equation}
 \cB=a^\dagger a \otimes \id_B.
\end{equation}
One observes that in the limit $N \to \infty$ system $B$ evolves completely freely and system $A$ is governed by
\begin{eqnarray}
d\rho_A=&-&i\omega_A[a^\dagger a, \rho_A]dt-\frac{\gamma}{2} [a^\dagger a, [a^\dagger a, \rho_A]]dt \nonumber \\
&-&i\sqrt{\gamma}[a^\dagger a,\rho_A] dW_t.
\end{eqnarray}
Considering the expansion of the density operator into photon-number states $\ket{n}_A$,
\begin{equation}
 \rho_A=\sum^\infty_{n,m=0} \rho_{n,m}(t) \ket{n}_A \bra{m}_A,
\end{equation}
and averaging over all realizations, the time evolution results in the model of a phase-damped oscillator,
\begin{equation}
 \frac{d\rho}{dt}=-i\omega_A[a^\dagger a, \rho_A]-\frac{\gamma}{2} [a^\dagger a, [a^\dagger a, \rho_A]],
\end{equation}
which has the solution 
\begin{equation}
 \rho_{n,m}(t)=\ee^{-i \omega_A(n-m)t} \ee^{-(n-m)^2\frac{\gamma t}{2}}\rho_{n,m}(0).
\end{equation}
The coherence between photon-number states is damped, whereas diagonal terms are not affected by the error of the pulse. This shows that
prepared photon-number states can be protected in the context of this model. Despite the fact that two harmonic oscillators are 
decoupled for large numbers of pulses, initially prepared coherent states of system $A$ are dephased by the imperfectness of the pulses. 

The model of two coupled harmonic oscillators can be extended to a model where a harmonic oscillator (subsystem) interacts with a 
collection of independent harmonic oscillators (environment) \cite{Caldeira}. For the case of $N \to \infty$ the environment and the subsystem
can be decoupled completely by repeated application of the unitary pulse defined in Eq. \eqref{harmoscdec}. As already seen, 
the pulse error affects only states which are superpositions of photon-number states.

\section{Conclusions}

In the context of dynamical decoupling procedures we presented a model of imperfect pulses. An imperfect pulse is modeled by a linear 
quantum stochastic differential equation with a standard Wiener process, and this proposal allows us to impose an overall decoherence on every 
system subjected to an imperfect dynamical decoupling sequence. This technical tool allowed us to derive the time evolution for any quantum system under the influence of 
such imperfect decoupling schemes and to study the limit of continuous control. The latter can be connected to von Neumann's mean ergodic theorem which describes the 
limit as a projection operator. The projection operator projects the generator of the decoupling-free evolution on the commutant of the ideal pulse.
Our stochastic model is considered such that each imperfect pulse would yield the ideal pulse if the time difference between their application
is zero. This construction is the main idea behind relating robust decoupling sequences to cases when the projection of the generator results in an 
error-free unitary evolution. If the projection does not give an error-free evolution, then the imperfectness will accumulate in time. However,  
experimental setups work with finite numbers of pulses, and therefore we made an attempt to quantify the convergence towards the projection operator with
inequalities.

The motivation for this particular model can also be understood from the point of view
of an open quantum system, where the pulse generator apparatus may be considered as an ancillary system coupled to the joint system 
of the protected and of the irrelevant quantum system. It is assumed that a Markovian model is obtained during the elimination of the ancillary
system and therefore a mathematically equivalent dynamics of stochastic evolution is considered for each individual pulse. This implies that the 
self-adjoint error operator $B$ is directly originating from the interaction Hamiltonian between the ancilla system and the 
rest (protected plus irrelevant). 
It is worth mentioning that our model contains an important assumption, namely, that the stochastic equation describing 
the imperfection of the pulse gives rise to dynamics on a coarse-grained time axis, which cannot resolve effects on the time scale of the ancillary system's correlation time.
However, when increasing the number of pulses, the time between two consecutive pulses decreases. Therefore, at some point the time scale of the
ancillary system's correlation time is reached. This case is beyond the scope of our model. Thus the limit of an infinite number of pulses, $N \to \infty$, has to be understood strictly
as a mathematical limit because $N$ is upper bounded by the ratio of the available decoupling time divided by the correlation time of the ancillary system. 
For example, in Ref. \cite{Biercuk1} the total decoupling times are $5$ and $30$ ms and the ancillary system is an optical microwave system, whose correlation times are on the order of nanoseconds. Therefore, the critical number of pulses is around several thousand. The highest number of pulses used in this experiment was 1000, which is below that threshold.
In general, recent experiments employ a number of decoupling pulses in the range of 10 to 1000, for which our Markovian error model should be sufficiently detailed.

We carefully investigated the effects of our imperfect control pulses in a qubit-qubit model and presented analytical results in the limit of continuous control. 
We also studied the case of finite numbers of pulses and presented numerical simulation results demonstrating how different types of 
errors influence the evolution of different initial states. These results were then applied to a more complex model of an electron spin coupling to a 
nuclear spin bath. We also briefly discussed the case of infinite dimensional Hilbert spaces using the example of two coupled harmonic oscillators.

It is our hope that both this model and its application to the presented systems
can shed some light on experimental findings which are not in agreement with the theory of ideal dynamical decoupling. 

\begin{acknowledgments}
This work is supported by the BMBF project Q.com.
\end{acknowledgments}

\appendix

\section{Ergodic theorem}
\label{appendix1}

In this appendix we reconsider and make more rigorous the proof of Ref. \cite{Facchi} given for the convergence of the decoupling procedures.
 
We begin with some general and well-known statements. Let $\mathcal{H}$ be a Hilbert space and define a bounded linear operator 
${\cal T}$ on $\mathcal{H}$. The null space and range of 
${\cal I}-{\cal T}$, with ${\cal I}$ being the identity operation acting on $\mathcal{H}$, 
is given by
\begin{eqnarray}
 \mathrm{Ran}({\cal I}-{\cal T})&=&\{Y| Y=X-{\cal T}(X),\, \forall X \in \mathcal{H}\}, \nonumber \\
 \mathrm{Ker}({\cal I}-{\cal T})&=&\{Y| Y={\cal T}(Y),\, Y \in \mathcal{H}\}.
\end{eqnarray}
Now, we prove an elementary but useful property of these sets in the following proposition.
\begin{proposition}
\label{prop1}
If ${\cal U}:\mathcal{H} \to \mathcal{H}$ is unitary, then
\begin{equation}
\mathrm{Ker}({\cal I}-{\cal U})=\mathrm{Ran}({\cal I}-{\cal U})^\perp.
\end{equation}
\end{proposition}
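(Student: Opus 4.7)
The plan is to prove both inclusions directly by manipulating inner products, using the defining property $\mathcal{U}^\dagger \mathcal{U} = \mathcal{I}$ of a unitary operator. The central observation that makes everything work is the symmetric statement $\mathcal{U} y = y \iff \mathcal{U}^\dagger y = y$: indeed, if $\mathcal{U} y = y$ then applying $\mathcal{U}^\dagger$ on the left gives $\mathcal{U}^\dagger y = \mathcal{U}^\dagger \mathcal{U} y = y$, and the converse is analogous. Equivalently, this is the statement $\mathrm{Ker}(\mathcal{I}-\mathcal{U}) = \mathrm{Ker}(\mathcal{I}-\mathcal{U}^\dagger)$.

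For the inclusion $\mathrm{Ker}(\mathcal{I}-\mathcal{U}) \subseteq \mathrm{Ran}(\mathcal{I}-\mathcal{U})^\perp$, I would take $y$ with $\mathcal{U} y = y$ and an arbitrary element $X - \mathcal{U} X \in \mathrm{Ran}(\mathcal{I}-\mathcal{U})$, and compute
\begin{equation}
\langle y, X - \mathcal{U} X \rangle = \langle y, X \rangle - \langle \mathcal{U}^\dagger y, X \rangle = 0,
\end{equation}
where the last step uses the central observation above to replace $\mathcal{U}^\dagger y$ by $y$. This shows $y \perp \mathrm{Ran}(\mathcal{I}-\mathcal{U})$.

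For the reverse inclusion $\mathrm{Ran}(\mathcal{I}-\mathcal{U})^\perp \subseteq \mathrm{Ker}(\mathcal{I}-\mathcal{U})$, I would take $y \in \mathrm{Ran}(\mathcal{I}-\mathcal{U})^\perp$, so that $\langle y, X - \mathcal{U} X \rangle = 0$ for every $X \in \mathcal{H}$. Rewriting this as $\langle (\mathcal{I}-\mathcal{U}^\dagger) y, X \rangle = 0$ for all $X$ forces $(\mathcal{I}-\mathcal{U}^\dagger)y = 0$, i.e.\ $\mathcal{U}^\dagger y = y$. Applying the central observation once more gives $\mathcal{U} y = y$, so $y \in \mathrm{Ker}(\mathcal{I}-\mathcal{U})$.

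There is no real obstacle here; the only subtlety is recognizing that the proof relies on two things simultaneously, namely the adjoint identity $\mathrm{Ran}(T)^\perp = \mathrm{Ker}(T^\dagger)$ (valid for any bounded $T$) together with the unitary-specific fact that $\mathcal{U}$ and $\mathcal{U}^\dagger$ share the same eigenspace at eigenvalue $1$. The argument proceeds entirely at the level of inner products and does not require invoking spectral theory or any closure argument, which keeps it clean.
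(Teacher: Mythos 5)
Your proof is correct, and its skeleton matches the paper's: both arguments reduce the claim to the combination of the general adjoint identity $\mathrm{Ran}(\mathcal{I}-\mathcal{U})^{\perp}=\mathrm{Ker}(\mathcal{I}-\mathcal{U}^{\dagger})$ (established, as you do, by moving the operator across the inner product) with the kernel identity $\mathrm{Ker}(\mathcal{I}-\mathcal{U})=\mathrm{Ker}(\mathcal{I}-\mathcal{U}^{\dagger})$. Where you differ is in how that second identity is obtained: the paper observes that $\mathcal{I}-\mathcal{U}$ is a \emph{normal} operator and invokes the general theorem (Theorem 12.12 of Rudin) that a normal operator and its adjoint have the same kernel, whereas you prove the fixed-point equivalence $\mathcal{U}y=y \iff \mathcal{U}^{\dagger}y=y$ directly by multiplying through with $\mathcal{U}^{\dagger}$ (respectively $\mathcal{U}$) and using $\mathcal{U}^{\dagger}\mathcal{U}=\mathcal{U}\mathcal{U}^{\dagger}=\mathcal{I}$. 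Your version is more elementary and self-contained, exploiting unitarity specifically; the paper's version is an instance of a more general fact about normal operators, at the cost of an external citation. One small point of care in your reverse inclusion: concluding $(\mathcal{I}-\mathcal{U}^{\dagger})y=0$ from $\langle(\mathcal{I}-\mathcal{U}^{\dagger})y,X\rangle=0$ for all $X$ is fine (take $X=(\mathcal{I}-\mathcal{U}^{\dagger})y$), and is exactly the nondegeneracy step implicit in the paper's identification of $\mathrm{Ker}(\mathcal{I}-\mathcal{U}^{\dagger})$ with $\mathrm{Ran}(\mathcal{I}-\mathcal{U})^{\perp}$, so no gap there.
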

\begin{proof}
 Let  ${\cal U}^\dagger:\mathcal{H} \to \mathcal{H}$ be the adjoint of ${\cal U}$ defined by
\begin{equation}
\langle {\cal U}(X),Y \rangle=\langle X,{\cal U}^\dagger (Y) \rangle,\,\,\forall X,Y \in \mathcal{H}, 
\end{equation}
implying $({\cal I}-{\cal U})^\dagger=I-{\cal U}^\dagger$, where $\langle.\,,.\rangle$ is the inner product on $\mathcal{H}$. 
An element $Y \in \mathrm{Ker}({\cal I}-{\cal U}^\dagger)$ fulfills
$({\cal I}-{\cal U}^\dagger)(Y)=0$, which implies that $\langle X,({\cal I}-{\cal U}^\dagger)(Y) \rangle=0$ for all 
$X \in \mathcal{H}$. The latest equation satisfies the relation
\begin{equation}
0=\langle X,({\cal I}-{\cal U}^\dagger)(Y) \rangle=\langle ({\cal I}-{\cal U})(X),Y \rangle,
\end{equation}
which is equivalent to the statement $Y \in \mathrm{Ran}({\cal I}-{\cal U})^\perp$. Thus 
$\mathrm{Ker}({\cal I}-{\cal U}^\dagger)=\mathrm{Ran}({\cal I}-{\cal U})^\perp$. 

The operator ${\cal I}-{\cal U}$ is a normal operator,
\begin{equation}
({\cal I}-{\cal U})^\dagger({\cal I}-{\cal U})=({\cal I}-{\cal U})({\cal I}-{\cal U})^\dagger, 
\end{equation}
which is equivalent to (see Theorem 12.12 in Ref. \cite{Rudin}), 
\begin{equation}
\mathrm{Ker}({\cal I}-{\cal U})=\mathrm{Ker}({\cal I}-{\cal U}^\dagger). 
\end{equation}
\end{proof}
\begin{theorem}[von Neumann's ergodic theorem]
\label{theorem1}
Let ${\cal U}$ be a unitary operator on the Hilbert space $\mathcal{H}$. Let ${\cal P}$ be the orthogonal projection
onto $\mathrm{Ker}({\cal I}-{\cal U})$. Then, for any $X \in \mathcal{H}$,
\begin{equation}
 \lim_{N \to \infty} \frac{1}{N} \sum_{k=0}^{N-1}{\cal U}^k(X)={\cal P}(X).
\end{equation}
\end{theorem}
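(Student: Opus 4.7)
The plan is to exploit Proposition \ref{prop1} to decompose $\mathcal{H}$ as the orthogonal direct sum
\begin{equation}
\mathcal{H} = \mathrm{Ker}(\mathcal{I}-\mathcal{U}) \oplus \overline{\mathrm{Ran}(\mathcal{I}-\mathcal{U})},
\end{equation}
and then verify the Ces\`aro limit separately on each summand, finally gluing the two results by linearity and continuity. Define the averaging operator $\mathcal{S}_N = \frac{1}{N}\sum_{k=0}^{N-1} \mathcal{U}^k$. Since $\mathcal{U}$ is unitary we have $\|\mathcal{U}^k(X)\|=\|X\|$, hence by the triangle inequality $\|\mathcal{S}_N(X)\| \leqslant \|X\|$, so $\mathcal{S}_N$ is a contraction uniformly in $N$. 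This uniform bound is what allows a $3\varepsilon$-style approximation argument in the last step.

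First I would handle the easy summand. If $X \in \mathrm{Ker}(\mathcal{I}-\mathcal{U})$, then $\mathcal{U}(X)=X$ and hence $\mathcal{U}^k(X)=X$ for every $k$, so $\mathcal{S}_N(X) = X = \mathcal{P}(X)$ exactly, for every $N$. Next I would handle the range: for $X \in \mathrm{Ran}(\mathcal{I}-\mathcal{U})$, write $X = Y - \mathcal{U}(Y)$ and use the telescoping identity
\begin{equation}
\mathcal{S}_N(X) = \frac{1}{N}\sum_{k=0}^{N-1}\bigl(\mathcal{U}^k(Y) - \mathcal{U}^{k+1}(Y)\bigr) = \frac{1}{N}\bigl(Y - \mathcal{U}^N(Y)\bigr).
\end{equation}
By unitarity, $\|\mathcal{U}^N(Y)\|=\|Y\|$, so $\|\mathcal{S}_N(X)\| \leqslant 2\|Y\|/N \to 0$, and $\mathcal{P}(X)=0$ since $X$ lies in $\mathrm{Ran}(\mathcal{I}-\mathcal{U}) \subseteq \mathrm{Ker}(\mathcal{I}-\mathcal{U})^\perp$ by Proposition \ref{prop1}.

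The third step is to extend this convergence from $\mathrm{Ran}(\mathcal{I}-\mathcal{U})$ to its closure. Given $X \in \overline{\mathrm{Ran}(\mathcal{I}-\mathcal{U})}$ and $\varepsilon>0$, pick $X' \in \mathrm{Ran}(\mathcal{I}-\mathcal{U})$ with $\|X-X'\|<\varepsilon$. Then
\begin{equation}
\|\mathcal{S}_N(X)\| \leqslant \|\mathcal{S}_N(X-X')\| + \|\mathcal{S}_N(X')\| \leqslant \varepsilon + \|\mathcal{S}_N(X')\|,
\end{equation}
and the second term tends to $0$ by the previous step, so $\limsup_N \|\mathcal{S}_N(X)\| \leqslant \varepsilon$ for every $\varepsilon>0$, i.e.\ $\mathcal{S}_N(X)\to 0 = \mathcal{P}(X)$.

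Finally, for an arbitrary $X \in \mathcal{H}$, invoke the orthogonal decomposition to write $X = X_1 + X_2$ with $X_1 \in \mathrm{Ker}(\mathcal{I}-\mathcal{U})$ and $X_2 \in \overline{\mathrm{Ran}(\mathcal{I}-\mathcal{U})}$. By linearity of $\mathcal{S}_N$ and the two cases above, $\mathcal{S}_N(X) = X_1 + \mathcal{S}_N(X_2) \to X_1 = \mathcal{P}(X)$, which is the claim. The main subtlety is the third step: the telescoping trick only works on the raw range $\mathrm{Ran}(\mathcal{I}-\mathcal{U})$, which need not be closed, so the uniform contractivity $\|\mathcal{S}_N\|\leqslant 1$ provided by unitarity is essential to pass to the closure and to guarantee that the decomposition actually covers all of $\mathcal{H}$.
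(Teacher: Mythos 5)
Your proof is correct and follows essentially the same route as the paper: the orthogonal decomposition $\mathcal{H}=\mathrm{Ker}(\mathcal{I}-\mathcal{U})\oplus\overline{\mathrm{Ran}(\mathcal{I}-\mathcal{U})}$ from Proposition \ref{prop1}, the telescoping estimate on $\mathrm{Ran}(\mathcal{I}-\mathcal{U})$, an $\varepsilon$-approximation (via the uniform bound $\|\mathcal{S}_N\|\leqslant 1$ coming from unitarity) to pass to the closure, and the trivial identity on the kernel. The only cosmetic difference is that you isolate the contraction property $\|\mathcal{S}_N\|\leqslant 1$ up front, which the paper instead applies term by term inside the estimate.
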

\begin{proof}
First let $Z=X-{\cal U}(X) \in \mathrm{Ran}({\cal I}-{\cal U})$.  Using the norm $||\cdot||$,
\begin{equation}
 ||X||=\langle X,X \rangle^{1/2}_{\mathcal{H}},\,\,\forall X \in \mathcal{H},
\end{equation}
where $\langle \cdot,\cdot\rangle_{\mathcal{H}}$ is the inner product of $\mathcal{H}$, we have
\begin{equation}
 ||\frac{1}{N} \sum_{k=0}^{N-1}{\cal U}^k(Z)||=\frac{||X-{\cal U}^N(X)||}{N}\leqslant\frac{||X||+||{\cal U}^N(X)||}{N},
\end{equation}
where we used the subadditivity and the positive homogeneity properties of the norm. The statement that ${\cal U}$ is unitary is equivalent
to $||{\cal U}(X)||=||X||$ for all $X \in \mathcal{H}$. By a repeated application of this statement, we can now conclude that
\begin{equation}
\label{notclosed}
||\frac{1}{N} \sum_{k=0}^{N-1}{\cal U}^k(Z)||\leqslant\frac{2||X||}{N} \to 0 
\end{equation}
as $N \to \infty$ for any $Z \in \mathrm{Ran}({\cal I}-{\cal U})$ . 
$||X||$ is finite, because $\mathcal{H}$ is a normed space, too. By Proposition 
\ref{prop1}, $\mathrm{Ker}({\cal I}-{\cal U})=\mathrm{Ran}({\cal I}-{\cal U})^\perp$, then it follows that
\begin{equation}
\mathcal{H}=\mathrm{Ker}({\cal I}-{\cal U})\oplus \overline{\mathrm{Ran}({\cal I}-{\cal U})}. 
\end{equation}
We require that the derivation in Eq. \eqref{notclosed} holds for all elements $Z \in \overline{\mathrm{Ran}({\cal I}-{\cal U})}$. Using the 
definition of set closure, we have that $\forall \epsilon > 0$ there exists an $X \in \mathrm{Ran}({\cal I}-{\cal U})$ such that
$||Z-X||<\epsilon$. Now, choose an $N'$ such that for any $N >N'$
\begin{equation}
||\frac{1}{N} \sum_{k=0}^{N-1}{\cal U}^k(X)||<\epsilon,
\end{equation}
then for all $N >N'$
\begin{eqnarray}
&&||\frac{1}{N} \sum_{k=0}^{N-1}{\cal U}^k(Z)||\leqslant \nonumber \\
&& \leqslant ||\frac{1}{N} \sum_{k=0}^{N-1}{\cal U}^k(Z-X)||+
||\frac{1}{N} \sum_{k=0}^{N-1}{\cal U}^k(X)|| \leqslant \nonumber \\
&&\leqslant\frac{1}{N} \sum_{k=0}^{N-1}||{\cal U}^k(Z-X)||+\epsilon. 
\end{eqnarray}
${\cal U}$ is a unitary operator, which implies that $||{\cal U}(X)||=||X||$. Then
\begin{equation}
||\frac{1}{N} \sum_{k=0}^{N-1}{\cal U}^k(Z)||\leqslant\frac{1}{N} \sum_{k=0}^{N-1}||Z-X||+\epsilon=2\epsilon. 
\end{equation}
This shows that 
\begin{equation}
\lim_{N \to \infty}||\frac{1}{N} \sum_{k=0}^{N-1}{\cal U}^k(Z)||=0 
\end{equation}
and proves the statement of the theorem for the orthogonal complement of $\mathrm{Ker}({\cal I}-{\cal U})$, thereby supposing that
$Z \in \mathrm{Ker}({\cal I}-{\cal U})$. Then,
\begin{equation}
\frac{1}{N} \sum_{k=0}^{N-1}{\cal U}^k(Z)=Z={\cal P}(Z). 
\end{equation}
Thus, the strong convergence holds on the whole $\mathcal{H}$. 
\end{proof}

Next, we will apply the theorem for dynamical decoupling schemes. Let us consider the lowest order in the Taylor expansion of the Hamiltonian,
which generates the time evolution of a system subject to multiple-pulse control. We consider an arbitrary number $M$ of pulses applied 
repeatedly. Using Eq. \eqref{HN} for $\tau \to 0$, the following limiting procedure:
\begin{eqnarray}
\label{theoremappl}
&&\lim_{L \to \infty} \frac{1}{LM} \sum^{L-1}_{i=0}\sum^{M}_{j=1} U^i \left(\prod_{n=0}^{j-1} u_{j-n}\right) H 
\left(\prod_{n=0}^{j-1} u_{j-n}\right)^\dagger \left(U^i\right)^\dagger, \nonumber \\
&&u^\dagger_n=u^{-1}_n, u_n \in {\cal B}(\mathcal{H}), \forall n \in \{1,2,\dots,M\}, \nonumber \\
&&U=\prod_{n=0}^{M-1} u_{M-n},\quad H^\dagger=H  
\end{eqnarray}
needs to be evaluated.

We consider the normed space of Hilbert-Schmidt operators, defined as
\begin{eqnarray}
{\cal B}_2(\mathcal{H})&=&\{X \in {\cal B}(\mathcal{H})|\,\,\, ||X||_2<\infty \}, \nonumber \\
||X||_2&=&\sqrt{\sum_i||Xe_i||^2}, 
\end{eqnarray}
where $\mathcal{H}$ is a separable Hilbert space, $||\cdot||$ is the norm of $\mathcal{H}$, and $(e_i)$ is an orthonormal basis of 
$\mathcal{H}$. ${\cal B}(\mathcal{H})$ is the set of all bounded linear operators defined on $\mathcal{H}$. We recall from basic 
operator theory (see Theorem VI.22 in Ref. \cite{Reed}) the fact that ${\cal B}_2(\mathcal{H})$ with the Hilbert-Schmidt inner product
\begin{equation}
 \langle X,Y \rangle=\sum_i\langle Xe_i,Ye_i \rangle=\mathrm{Tr}\{X^\dagger Y\},\,\, \forall X,Y \in {\cal B}_2(\mathcal{H})
\end{equation}
is a Hilbert space. In the case of a finite-dimensional Hilbert space $\mathcal{H}$ the two sets ${\cal B}_2(\mathcal{H})$ and
${\cal B}(\mathcal{H})$ coincide; furthermore ${\cal B}(\mathcal{H})$ is the set of all linear operators. 

We define the linear operator ${\cal U}:{\cal B}_2(\mathcal{H}) \to {\cal B}_2(\mathcal{H})$ used in Theorem 
\ref{theorem1} by ${\cal U}(X)=U X U^\dagger$ [$U$ being unitary, defined in Eq.  \eqref{theoremappl}], 
which is a unitary operator due to the relation
\begin{eqnarray}
\langle {\cal U}(X),{\cal U}(Y) \rangle&=&\mathrm{Tr}\{U X^\dagger U^\dagger U Y U^\dagger\}=\mathrm{Tr}\{U X^\dagger Y U^\dagger\} 
\nonumber \\
=\mathrm{Tr}\{X^\dagger Y\}&=&\langle X,Y \rangle,\,\,\forall X,Y \in {\cal B}_2(\mathcal{H}).
\end{eqnarray}
Immediately from Theorem \ref{theorem1} we observe that the limit in Eq. \eqref{theoremappl} is equal to
\begin{equation}
 \frac{1}{M} \sum^{M}_{j=1} {\cal P} \left(\left(\prod_{n=0}^{j-1} u_{j-n}\right) H 
\left(\prod_{n=0}^{j-1} u_{j-n}\right)^\dagger \right),
\end{equation}
where ${\cal P}$ is an orthogonal projection onto the set $\{X \in {\cal B}_2(\mathcal{H})|UX=XU\}$. We remark that if $X^\dagger=X$, then
${\cal P}(X)^\dagger={\cal P}(X)$, an important property in the case of Hamiltonians.

So far, we gave a proof for the case of Hilbert-Schmidt operators and unitary operators with an arbitrary spectrum. 
Now, we extend the proof for bounded linear operators; however, we assume that $U$ has only a point spectrum,
\begin{equation}
 U=\sum_k \lambda_k P_k, \,\,|\lambda_k|=1,\,\,\sum_k P_k={\cal I}, 
\end{equation}
and $P_k$ are orthogonal projections. We consider the following limit:
\begin{equation}
\label{limitbounded}
 \lim_{N \to \infty} \frac{1}{N}\sum^{N-1}_{i=0} U^i H (U^i)^\dagger, \; H \in {\cal B}(\mathcal{H}). 
\end{equation}
Since $U$ is unitary then $\mathrm{Ran}(U)=\mathcal{H}$, which implies that
\begin{eqnarray}
&&\mathcal{H}=\overline{\bigoplus_k \mathrm{Ran}(P_k)},\\
&&\mathrm{Ran}(P_j)\cap \mathrm{Ran}(P_k)=\emptyset,\, k\neq j. \nonumber 
\end{eqnarray}
Choosing $x_k \in \mathrm{Ran}(P_k)$ Eq. \eqref{limitbounded} yields
\begin{equation}
 \lim_{N \to \infty} \frac{1}{N}\sum^{N-1}_{i=0} U^i H (U^i)^\dagger x_k = \lim_{N \to \infty} \frac{1}{N}\sum^{N-1}_{i=0} (U \lambda^*_k)^i
H x_k.
\end{equation}
We apply Theorem \ref{theorem1} for $U \lambda^*_k={\cal U}$ being unitary and for $H x_k=X \in \mathcal{H}$, the latter being true because
$H$ is bounded. Let ${\cal P}$ be the orthogonal projection on $\mathrm{Ker}({\cal I}-{\cal U})$ and we have
\begin{equation}
\lim_{N \to \infty} \frac{1}{N}\sum^{N-1}_{i=0} (U \lambda^*_k)^i H x_k={\cal P} (H x_k). 
\end{equation}
The set $\mathrm{Ker}({\cal I}-{\cal U})$ contains all those Hilbert space elements, which fulfill the equation
\begin{equation}
 (U \lambda^*_k) y=y,\,\, y \in \mathcal{H}.
\end{equation}
Allowing that $U$ can have a degenerate spectrum, we define the index set $I_k=\{i|\lambda_k=\lambda_i,\,k\neq i\}$, and then 
\begin{equation}
\mathrm{Ker}({\cal I}-U \lambda^*_k)=\Big\{y \in \mathcal{H}| P_iy=y,\,\, i \in I_k \cup \{k\}\Big\}.
\end{equation}
The collection of $\mathrm{Ran}(P_k)$ are orthogonal subspaces on $\mathcal{H}$, so the limit in Eq. \eqref{limitbounded} results in
\begin{eqnarray}
\label{limitfinal}
&&\lim_{N \to \infty} \frac{1}{N}\sum^{N-1}_{i=0} U^i H (U^i)^\dagger=\sum_k P_k H P_k+  \nonumber \\
&&+\sum_{i,j}P_i H P_j,\; i \neq j,\; \lambda_i=\lambda_j.
\end{eqnarray}
Thus, the limit is an element of the set $\{X \in {\cal B}(\mathcal{H})|UX=XU\}$.  

In the case of an unbounded operator $H$, the domain $\mathrm{Dom}(H)\subset\mathcal{H}$.
Theorem \ref{theorem1} is applied to each $x_k \in \mathrm{Ran}(P_k) \cap \mathrm{Dom}(H) \neq \emptyset$. We get the same result as
in Eq. \eqref{limitfinal}, excluding those orthogonal projections $P_i$ which satisfy 
$\mathrm{Ran}(P_i) \cap \mathrm{Dom}(H) = \emptyset$. Now, the limit is an operator $H'$ which obeys the relation
\begin{equation}
UH'x=H'Ux,\,\; \forall \, x \in \mathrm{Dom}(H). 
\end{equation}
In summary, we proved the convergence for 
${\cal B}_2(\mathcal{H})$. We were able to extend the result to ${\cal B}(\mathcal{H})$ and to some unbounded operators 
with the condition that the unitary operator, implementing the pulse, has only a point spectrum.

We leave the problem open in the case when the unitary operators have a mixed spectrum (see the characterization of these unitaries in 
Ref. \cite{Krengel}). These mathematical subtleties do not play a role in the examples we present in the main text.  

\section{Inequalities}
\label{appendix2}

Here, we derive two inequalities in order to analyze the rate of the decoupling convergence as a function of applied pulse number.

We begin with some basic notions. For a Hilbert space $\mathcal{H}$, the set of all bounded linear operators ${\cal B}(\mathcal{H})$ has a 
special property
\begin{eqnarray}
 ||A||^2_{op}=||A^\dagger A||_{op}, \; \forall \, A \in {\cal B}(\mathcal{H}), 
 \end{eqnarray}
where $||A||_{op}=\sup\limits_{||x||=1}||Ax||$ is the operator norm and $||\cdot||$ is the norm of the Hilbert space, induced by the 
inner product. A consequence is that the operator norm is a unitarily invariant norm: 
for any unitary operator $U_1$ and $U_2$ defined on $\mathcal{H}$
\begin{equation}
\label{unitarilyinv}
 ||A||_{op}=||U_1 A U_2||_{op},\; \forall \, A \in {\cal B}(\mathcal{H}).
\end{equation}
This can be briefly proved by using the norm-preserving property of any unitary operator $U$,
\begin{equation}
 ||AU||_{op}=\sup\limits_{||x||=1}||AUx||=\sup\limits_{||x'||=1}||Ax'||=||A||_{op},
\end{equation}
in the equation
\begin{eqnarray}
 &&||A||^2_{op}=||AU_2||^2_{op}=||U^\dagger_2 A ^\dagger A U_2||_{op}= \nonumber \\
&&=|| \left(U^\dagger_2 A^\dagger U^\dagger_1\right)U_1AU_2||_{op}=||U_1AU_2||^2_{op}.
\end{eqnarray}

\begin{theorem}
\label{theorem2}
Let $\mathcal{H}$ be a Hilbert space and $(u_n)_{n \in \mathbb{N}^+}$ a set of unitary operators defined on $\mathcal{H}$. 
Consider the unitary operator
\begin{equation}
U_N(t) = \prod_{n=0}^{N-1} u_{N-n} \ee^{-iHt/N},\,t \in \mathbb{R} 
\end{equation}
with self-adjoint $H \in {\cal B}(\mathcal{H})$. Then, for any  $\rho \in {\cal B}(\mathcal{H})$
\begin{eqnarray}
&&||\rho_N(t)-\rho_{N+1}(t)||_{op}\leqslant 2||\rho||_{op}\left(\ee^{2||H||_{op}|t|/N}-1\right)+ \nonumber \\
&&+||\rho_N(t)-u_{N+1}\rho_N(t)u^\dagger_{N+1}||_{op}, \nonumber \\
&&\rho_N(t)=U_N(t) \rho U^\dagger_N(t). 
\end{eqnarray}
\end{theorem}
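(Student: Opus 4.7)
The plan is to reduce the claim to estimating the operator-norm difference of two unitaries through two applications of the triangle inequality. First, inserting the intermediate state $u_{N+1}\rho_N(t)u^\dagger_{N+1}$ between $\rho_N(t)$ and $\rho_{N+1}(t)$ yields
\begin{equation*}
||\rho_N(t)-\rho_{N+1}(t)||_{op} \leqslant ||\rho_N(t)-u_{N+1}\rho_N(t)u^\dagger_{N+1}||_{op} + ||u_{N+1}\rho_N(t)u^\dagger_{N+1}-\rho_{N+1}(t)||_{op}.
\end{equation*}
The first summand is precisely the rotated term appearing on the right-hand side of the claim, so the remaining task is to bound the second summand by $2||\rho||_{op}(\ee^{2||H||_{op}|t|/N}-1)$.

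For the second summand I would set $V_1=u_{N+1}U_N(t)$ and $V_2=U_{N+1}(t)$, so that it equals $||V_1\rho V_1^\dagger-V_2\rho V_2^\dagger||_{op}$. Adding and subtracting $V_1\rho V_2^\dagger$ and exploiting $||V_j||_{op}=1$ gives the standard estimate $||V_1\rho V_1^\dagger-V_2\rho V_2^\dagger||_{op}\leqslant 2||\rho||_{op}\,||V_1-V_2||_{op}$, and the unitary invariance of the operator norm recalled just above the theorem further identifies $||V_1-V_2||_{op}=||U_N(t)-u^\dagger_{N+1}U_{N+1}(t)||_{op}$.

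The core of the argument is to bound this last difference. Writing $K=\ee^{-iHt/N}$ and $L=\ee^{-iHt/(N+1)}$, the two unitaries read
\begin{align*}
U_N(t) &= u_N K u_{N-1} K \cdots u_1 K,\\
u^\dagger_{N+1}U_{N+1}(t) &= L\,u_N L u_{N-1} L \cdots u_1 L,
\end{align*}
so the second product carries one extra exponential factor $L$ on the left. I would absorb this structural mismatch by inserting $L\,U_N(t)$ as a further intermediate: the piece $(\id-L)U_N(t)$ is bounded by $||\id-L||_{op}\leqslant ||H||_{op}|t|/(N+1)$ via the elementary estimate $||\id-\ee^{iA}||_{op}\leqslant ||A||_{op}$ for self-adjoint $A$, while the remaining piece $L(U_N(t)-U_N^{(N+1)}(t))$, with $U_N^{(N+1)}(t):=u_N L u_{N-1} L \cdots u_1 L$ the same product with $K$ replaced by $L$ everywhere, is controlled by an $N$-term telescope $||A_1\cdots A_{2N}-A_1'\cdots A_{2N}'||_{op}\leqslant \sum_i||A_i-A_i'||_{op}$ valid for contractions, together with the Duhamel-type bound $||\ee^{iX}-\ee^{iY}||_{op}\leqslant ||X-Y||_{op}$ for self-adjoint $X,Y$. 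This second contribution is at most $N||K-L||_{op}\leqslant ||H||_{op}|t|/(N+1)$.

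Summing the two pieces gives $||U_N(t)-u^\dagger_{N+1}U_{N+1}(t)||_{op}\leqslant 2||H||_{op}|t|/(N+1)\leqslant 2||H||_{op}|t|/N\leqslant \ee^{2||H||_{op}|t|/N}-1$, the last step being the elementary inequality $x\leqslant \ee^x-1$ for $x\geqslant 0$. Multiplying by the factor $2||\rho||_{op}$ from the second step then reproduces exactly the claimed exponential bound. I expect the main obstacle to be handling the mismatched number of exponential factors in the two products; the choice of $L\,U_N(t)$ as the intermediate is the decisive step, since it isolates the length mismatch into a single $(\id-L)$ correction and leaves a remainder amenable to a clean per-factor telescope against $||K-L||_{op}\leqslant ||H||_{op}|t|/(N(N+1))$.
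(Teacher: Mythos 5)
Your proof is correct, but it takes a genuinely different route from the paper's. The paper rewrites each factor $\ee^{-iHt/(N+1)}$ in $U_{N+1}(t)$ as $\ee^{-iHt/N}\ee^{iHt/(N^2+N)}$, pushes the small correction factors through the product, and expands them in an infinite series of iterated commutators; bounding that series gives one contribution $||\rho||_{op}(\ee^{2||H||_{op}|t|/N}-1)$, and a second series expansion of $\ee^{\pm iHt/N}$ strips off the remaining conjugation and produces the second such contribution, leaving the $u_{N+1}$-rotation term. You instead work entirely at the level of the unitaries: after the triangle inequality isolating $||\rho_N(t)-u_{N+1}\rho_N(t)u^\dagger_{N+1}||_{op}$, you use the standard estimate $||V_1\rho V_1^\dagger-V_2\rho V_2^\dagger||_{op}\leqslant 2||\rho||_{op}\,||V_1-V_2||_{op}$, unitary invariance, and a telescoping bound on $||U_N(t)-u^\dagger_{N+1}U_{N+1}(t)||_{op}$ based on $||\id-\ee^{iA}||_{op}\leqslant||A||_{op}$ and $||\ee^{iX}-\ee^{iY}||_{op}\leqslant||X-Y||_{op}$ (here $X$ and $Y$ even commute, so spectral calculus suffices). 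The length mismatch of one exponential factor is correctly absorbed by the single $(\id-L)$ insertion, and the telescope over $2N$ contractions is valid, giving $||U_N(t)-u^\dagger_{N+1}U_{N+1}(t)||_{op}\leqslant 2||H||_{op}|t|/(N+1)$; multiplying by $2||\rho||_{op}$ yields a bound linear in $|t|/N$, strictly tighter than the stated one, which you then legitimately relax via $x\leqslant\ee^x-1$. What each approach buys: yours is more elementary and sharper for this deterministic statement; the paper's series/commutator bookkeeping is looser here, but its intermediate quantities and expansion technique are the ones reused verbatim in Corollary 2.4 and in the subsequent stochastic estimates of Appendix B, which is presumably why the authors phrase the bound in exponential form from the start.
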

\begin{proof}
 Using the definition of $U_N(t)$ we have
\begin{equation}
 U_{N+1}(t) = \prod_{n=0}^{N} u_{N+1-n} \ee^{-iH\frac{t}{N}}\ee^{iHt/(N^2+N)}.
\end{equation}
Since $H \in {\cal B}(\mathcal{H})$, then we know from basic operator theory that  
\begin{equation}
 \sum^{\infty}_{n=0} \frac{H^n}{n!}=\ee^{H},
\end{equation}
because the infinite series $\sum^{\infty}_{n=0} \frac{H^n}{n!}$ converge absolutely. 
We shall now express $\ee^{iHt/(N^2+N)}$ by its infinite series. Let us define
the unitary operator
\begin{equation}
 X_{j,i}=\prod_{n=j}^{i}u_n \ee^{-iHt/N};
\end{equation}
then we have
\begin{widetext}
\begin{eqnarray}
\label{expand}
 &&\rho_{N+1}(t)=u_{N+1}\ee^{-iH\frac{t}{N}}\rho_N(t)\ee^{iH\frac{t}{N}}u^\dagger_{N+1}+
\frac{it}{N(N+1)}\Big(X_{N+1,1}[H,\rho]X^\dagger_{N+1,1}+\sum^N_{i=1} X_{N+1,i+1} [H, X_{i,1}\rho X^\dagger_{i,1}]X^\dagger_{N+1,i+1}\Big)
 \nonumber \\
&&+\frac{t^2}{2!N^2(N+1)^2}\Big(-X_{N+1,1}[H,[H,\rho]]X^\dagger_{N+1,1}-
\sum^N_{i=1} X_{N+1,i+1} [H,[H, X_{i,1}\rho X^\dagger_{i,1}]]X^\dagger_{N+1,i+1}+\dots \Big)+\dots, 
\end{eqnarray}
\end{widetext}
where the binary operation $[\cdot , \cdot$ is the commutator.

For any $\rho,A \in {\cal B}(\mathcal{H})$
\begin{eqnarray}
\label{submult2}
||A^n||_{op}&\leqslant&||A||^n_{op},  \\
||A^n \rho A^m||_{op}&\leqslant&||\rho||_{op} ||A||^{n+m}_{op}, \; \forall \, n,m \in \mathbb{N}, \nonumber
\end{eqnarray}
where the submultiplicative property of the operator norm was used. Now, using the positive 
homogeneity
\begin{equation}
 ||\lambda A||_{op}=|\lambda|\,||A||_{op}, \; \forall \, \lambda \in \mathbb{C},
\end{equation}
and the subadditivity
\begin{equation}
 ||A+B||_{op}\leqslant||A||_{op}+||B||_{op}, \; \forall \, A,B \in {\cal B}(\mathcal{H})
\end{equation}
with the aid that the operator norm is also unitarily invariant [see Eq. \eqref{unitarilyinv}],
we have  
\begin{eqnarray}
&&||\rho_N(t)-\rho_{N+1}(t)||_{op}\leqslant||\rho||_{op} \sum^\infty_{n=1}\frac{(||H||_{op}\frac{2|t|}{N})^n}{n!}+ \nonumber \\
&&+||\rho_N(t)-u_{N+1}\ee^{-iHt/N}\rho_N(t)\ee^{iH\frac{t}{N}}u^\dagger_{N+1}||_{op},
\end{eqnarray}
which is a direct consequence of Eqs. \eqref{expand} and \eqref{submult2}. 

Now expressing $\ee^{iH\frac{t}{N}}$ by its infinite series, we have the following inequality:
\begin{eqnarray}
&&||\rho_N(t)-u_{N+1}\ee^{-iH\frac{t}{N}}\rho_N(t)\ee^{iH\frac{t}{N}}u^\dagger_{N+1}||_{op}\leqslant ||\rho||_{op} \times \nonumber \\
&&\times \sum^\infty_{n=1}\frac{(||H||_{op}\frac{2|t|}{N})^n}{n!}+||\rho_N(t)-u_{N+1}\rho_N(t)u^\dagger_{N+1}||_{op}. \nonumber \\
\end{eqnarray}
It follows immediately that
\begin{eqnarray}
&&||\rho_N(t)-\rho_{N+1}(t)||_{op}\leqslant 2||\rho||_{op}\left(\ee^{2||H||_{op}|t|/N}-1\right)+ \nonumber \\ 
&&+||\rho_N(t)-u_{N+1}\rho_N(t)u^\dagger_{N+1}||_{op}. \nonumber 
\end{eqnarray}
\end{proof}
The result can be used in two different ways. First, $\rho$ is a state and $U_N(t)\rho U^\dagger_N(t)$ describes the time evolution. 
A positive operator $\rho$ with the property
\begin{equation}
 \sum_i \langle e_i, \rho e_i \rangle_{\mathcal{H}}=\mathrm{Tr}(\rho)=1
\end{equation}
is called a state. $(e_i)$ is an orthonormal basis of the separable Hilbert space $\mathcal{H}$ with inner product 
$\langle \cdot , \cdot \rangle_{\mathcal{H}}$.

 Second, the inequality also holds for the case when $\rho_N(t)=U^\dagger_N(t) O U_N(t)$ and now $O=O^\dagger$ is an 
observable (self-adjoint operator). The latter describes the time evolution of an observable $O$.

\begin{corollary}
\label{corollary}
Let $\mathcal{H}$ be a Hilbert space and $(u_n)_{n \in \mathbb{N}^+}$ a set of unitary operators defined on $\mathcal{H}$, such that
for $L \in \mathbb{N}^+$
\begin{equation}
\prod_{n=0}^{L-1} u_{k+L-n}=\id, \; \forall \, k \in \mathbb{N}.  
\end{equation} Consider the unitary operator
\begin{equation}
U_N(t) = \prod_{n=0}^{N-1} u_{N-n} \ee^{-iHt/N},\; t \in \mathbb{R} 
\end{equation}
with self-adjoint $H \in {\cal B}(\mathcal{H})$. Then, for any  $\rho \in {\cal B}(\mathcal{H})$
\begin{eqnarray}
&&||\rho_N(t)-\rho_{N+L}(t)||_{op}\leqslant ||\rho||_{op}\left(\ee^{2L||H||_{op}|t|/N}-1\right)+ \nonumber \\
&&\,\,\,\,\,\,\,\,\,\,\,\,\,\,\,\,\,\,\,\,\,\,\,\,\,\,\,\,\,\,\,\,\,\,\,+L||\rho||_{op}\left(\ee^{2||H||_{op}|t|/N}-1\right), \nonumber \\
&&\rho_N(t)=U_N(t) \rho U^\dagger_N(t). 
\end{eqnarray}
\end{corollary}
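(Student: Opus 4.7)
The plan is to extend the single-pulse argument of Theorem \ref{theorem2} to the simultaneous insertion of $L$ additional pulses, exploiting the cycle hypothesis $\prod_{n=0}^{L-1} u_{N+L-n} = \id$ to eliminate the residual unitary term that otherwise survives naive iteration of Theorem \ref{theorem2}.

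First I would factor $\ee^{-iHt/(N+L)} = \ee^{-iHt/N}\,\ee^{iHtL/(N(N+L))}$ in every one of the $N+L$ intervals of $U_{N+L}(t)$, mirroring the decomposition used in the proof of Theorem \ref{theorem2}. Writing $\delta := \ee^{iHtL/(N(N+L))}$ and $W_L(t) := \prod_{k=0}^{L-1} u_{N+L-k}\,\ee^{-iHt/N}$, the zeroth order in $\delta - \id$ produces $U_{N+L}^{(0)}(t) = W_L(t)\,U_N(t)$, while the higher-order corrections to $\rho_{N+L}(t)$ form the same type of nested-commutator series analyzed in the proof of Theorem \ref{theorem2}, now with $N+L$ insertion positions and prefactor $tL/(N(N+L))$ per $\delta$-expansion. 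The identity $(N+L) \cdot tL/(N(N+L)) = tL/N$ then makes the geometric series sum to
\begin{equation*}
||\rho_{N+L}(t) - W_L(t)\rho_N(t)W_L^\dagger(t)||_{op} \leqslant ||\rho||_{op}\bigl(\ee^{2L||H||_{op}|t|/N} - 1\bigr),
\end{equation*}
which accounts for the first summand of the claimed bound.

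Next I would use the cycle hypothesis to control the residual conjugation by $W_L(t)$. Since $W_L(0) = u_{N+L}u_{N+L-1}\cdots u_{N+1} = \id$, a standard product telescoping across the $L$ factors of $W_L(t)$ bounds each difference by $||\ee^{-iHt/N} - \id||_{op} \leqslant \ee^{||H||_{op}|t|/N} - 1$, yielding $||W_L(t) - \id||_{op} \leqslant L(\ee^{||H||_{op}|t|/N} - 1)$. Unitarity of $W_L(t)$ gives $||W_L \rho_N W_L^\dagger - \rho_N||_{op} \leqslant 2||W_L(t) - \id||_{op}\,||\rho||_{op}$, and the elementary inequality $2(\ee^x - 1) \leqslant \ee^{2x} - 1$ for $x \geqslant 0$ promotes this to $L||\rho||_{op}(\ee^{2||H||_{op}|t|/N} - 1)$. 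A triangle inequality then combines the two pieces to produce the stated result.

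The main obstacle is the bookkeeping of the $\delta$-expansion in the first step: with $N+L$ copies of $\delta$ on each side of $\rho$, the combinatorics is heavier than in Theorem \ref{theorem2}, and one must verify carefully that the combined count of insertion positions and the reduced prefactor $tL/(N(N+L))$ produce exactly the exponent $2L||H||_{op}|t|/N$. The identity $(N+L) \cdot tL/(N(N+L)) = tL/N$ is the decisive observation; it reduces to the $L=1$ scaling of Theorem \ref{theorem2}, and it guarantees that the first term of the bound does not degrade faster than exponentially in $L/N$.
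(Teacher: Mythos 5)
Your proposal is correct and follows essentially the same route as the paper: the same factorization $\ee^{-iHt/(N+L)}=\ee^{-iHt/N}\,\ee^{iHtL/(N(N+L))}$, the same commutator-series estimate with $N+L$ insertion points and prefactor $tL/(N(N+L))$ yielding the $\ee^{2L||H||_{op}|t|/N}$ term, and the same use of the cycle condition to eliminate the residual pulse product. Your handling of the residual factor $W_L(t)$ (telescoping to $||W_L(t)-\id||_{op}\leqslant L(\ee^{||H||_{op}|t|/N}-1)$, then $||W_L\rho_N W_L^\dagger-\rho_N||_{op}\leqslant 2||W_L-\id||_{op}||\rho||_{op}$ and $2(\ee^{x}-1)\leqslant \ee^{2x}-1$) is a slightly more elementary variant of the paper's repeated series extraction, but it lands on the identical bound $L||\rho||_{op}(\ee^{2||H||_{op}|t|/N}-1)$.
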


\begin{proof}
Immediate from Theorem \ref{theorem2} after using
\begin{equation}
 U_{N+L}(t) = \prod_{n=0}^{N+L-1} u_{N+L-n} \ee^{-iH\frac{t}{N}}\ee^{iHLt/(N^2+NL)} 
\end{equation}
and the inequality
\begin{widetext}
\begin{eqnarray}
||\rho_N(t)&-&\left(\prod_{n=0}^{L-1}u_{N+L-n}\ee^{-iH\frac{t}{N}}\right)\rho_N(t)
\left(\prod_{n=0}^{L-1}u_{N+L-n}\ee^{-iH\frac{t}{N}}\right)^\dagger||_{op}\leqslant L||\rho||_{op} 
 \sum^\infty_{n=1}\frac{(||H||_{op}\frac{2|t|}{N})^n}{n!}+ \nonumber \\
&+&||\rho_N(t)-\left(\prod_{n=0}^{L-1}u_{N+L-n}\right)\rho_N(t)\left(\prod_{n=0}^{L-1}u_{N+L-n}\right)^\dagger||_{op}=
L||\rho||_{op}\left(\ee^{2||H||_{op}|t|/N}-1\right). \nonumber \\
\end{eqnarray}
\end{widetext}
\end{proof}

\begin{remark}
Any two norms are equivalent in the case when $\mathcal{H}$ is finite dimensional, so convergence can be shown with respect to any norm. 
Furthermore, any Schatten $p$-norm $||\cdot||_p$ (see Ref. \cite{Bhatia}) is unitarily invariant. This implies that the operator norm 
in the above inequalities can be replaced by an arbitrary Schatten norm. 
\end{remark}

As an example we demonstrate that the above results are essential in the derivation of an inequality for the case of 
random unitary pulses which are modeled by Eq. \eqref{basic}. We also show how to deal with the Wiener process $W_t$ in order to get an 
inequality for the averaged distance between density operators in a large generated ensemble. We consider the case with cycle length $2$ ($U^2=\id$),
and the state of the system after $N$ steps of decoupling has the form
\begin{eqnarray}
 \rho_N(t)&=&U_N(t) \rho U^\dagger_N(t), \nonumber \\
U_N(t)&=& \left[ U\left(t/N^2\right) \ee^{-iHt/N}\right]^N, \nonumber \\
U(t/N^2)&=&\ee^{-i \sqrt{\gamma}B W_{t/N^2}}U.
\end{eqnarray}

The unitarily invariant property of the operator norm is exploited by dissociating $\ee^{-iHt/(N+2)} $ into
$\ee^{-iHt/N} \ee^{2iHt/(N^2+2N)}$. We express the terms $\ee^{2iHt/(N^2+2N)}$ by its infinite series. Using the properties
of the operator norm and the result of \ref{corollary}, we obtain
\begin{widetext}
\begin{eqnarray}
&&||\rho_N(t)-\rho_{N+2}(t)||_{op}\leqslant ||\rho||_{op}\left(\ee^{4||H||_{op}t/N}-1\right)+
2||\rho||_{op}\left(\ee^{2||H||_{op}t/N}-1\right)+||\rho_N(t)-V_N(t)\rho V^\dagger_N(t)||_{op}, \nonumber \\
&&V_N(t)= U^2\Big(t/(N+2)^2\Big)\left[ U\Big(t/(N+2)^2\Big) \ee^{-iHt/N}\right]^N=U^2\Big(t/(N+2)^2\Big)
U'_N(t).
\end{eqnarray} 
\end{widetext}
Outside the $N$-th-power-raised braces, i.e., $U'_N(t)$, there is a squared term. 
Replacing  $\ee^{-i \sqrt{\gamma}B W_{t/(N+2)^2}}$ by its infinite series and using the property $U^2=\id$ we have
\begin{eqnarray}
&&||\rho_N(t)-V_N(t)\rho V^\dagger_N(t)||_{op}\leqslant ||\rho_N(t)-U'_N(t)\rho U'^\dagger_N(t)||_{op} \nonumber \\
&&+2||\rho||_{op}\left(\ee^{2\sqrt{\gamma}||B||_{op}W_{t/(N+2)^2}}-1\right).
\label{in2}
\end{eqnarray}
Next we analyze the difference between the unitary operators $U_N(t)$ and $U'_N(t)$. The corresponding inequality is more complicated 
than the inequalities derived before since the Wiener process is distributed in the $N$-term product and the error operator $B$ does 
not commute with either the ideal pulse $U$ or the Hamiltonian $H$. Hence it is convenient to investigate the difference between
the random processes $W_{t/N^2}$ and $W_{t/(N+2)^2}$.

We recall the definition of the Wiener process, where it is stated that the probability distribution of 
the random variable $W_{t_1}-W_{t_2}$ for $t_1>t_2\geqslant0$ is
\begin{equation}
 f_{t_1-t_2}(x)=\frac{1}{\sqrt{2 \pi (t_1-t_2)}} \ee^{-\frac{x^2}{2(t_1-t_2)}}
\end{equation}
and the moments are
\begin{equation}
 \mathbb{E}\left[(W_{t_1}-W_{t_2})^n\right] =
      \begin{cases}
        0 & n\text{ is odd}, \\
        (n-1)!!\,(t_1-t_2)^{n/2} & n\text{ is even},
      \end{cases}
\end{equation}
where $(n-1)!!=1\times3\times\cdots\times(n-1)$. In our special case $t_1=t/N^2$ and $t_2=t/(N+2)^2$, so 
\begin{equation}
 t_1-t_2=4t\frac{N+1}{N^2(N+2)^2}.
\end{equation}
Hence, we can write for the case $t/N^3\ll1$ that the probability
\begin{equation}
P\left(|W_{t/N^2}-W_{t/(N+2)^2}| \leqslant 2\sqrt{t}\frac{\sqrt{N+1}}{N(N+2)}\right)\simeq1. 
\end{equation}
Therefore we can estimate $W_{t/(N+2)^2}$ by its upper bound $W_{t/N^2} \pm 2\sqrt{t}\frac{\sqrt{N+1}}{N(N+2)}$, and expressing every
exponential $\ee^{-i\sqrt{\gamma} B 2\sqrt{t(N+1)}/(N^2+2N)}$ by its infinite series, we have
\begin{eqnarray}
&&\mathbb{E}\left[||\rho_N(t)-U'_N(t)\rho U'^\dagger_N(t)||_{op}\right]\leqslant \nonumber \\
&&\leqslant||\rho||_{op}\left(\ee^{4\frac{\sqrt{\gamma t (N+1) }}{N+2}||B||_{op}}-1\right).
\label{in3}
\end{eqnarray}

 We finally turn to the main inequality, and taking an average over the Wiener process yields for $t\geqslant0$
\begin{widetext}
 \begin{eqnarray}
\mathbb{E}\left[ ||\rho_N(t)-\rho_{N+2}(t)||_{op}\right] \leqslant ||\rho||_{op}\left(\ee^{4||H||_{op}t/N}
+2\ee^{2\left[||H||_{op}/N + 2\gamma ||B||^2_{op}/(N+2)^2 \right]t}+\ee^{4\sqrt{\gamma t (N+1) }/(N+2)||B||_{op}}-4 \right).
 \end{eqnarray}
\end{widetext}
We note that in the limit $N \to \infty$ the right-hand side of the inequality tends to $0$. From a practical point of view this means that
for large enough numbers of applied random pulses $N\gg ||B||_{op}\sqrt{\gamma t},||H||_{op}t$ the state of the system $\rho_N(t)$ remains unchanged after the application of
new cycles of pulses.

\end{document}